 \definecolor{BLACK}{gray}{0}
 \definecolor{WHITE}{gray}{1}
 \definecolor{RED}{rgb}{1,0,0}
 \definecolor{GREEN}{rgb}{0,1,0}
 \definecolor{BLUE}{rgb}{0,0,1}
 \definecolor{CYAN}{cmyk}{1,0,0,0}
 \definecolor{MAGENTA}{cmyk}{0,1,0,0}
 \definecolor{YELLOW}{cmyk}{0,0,1,0}
\theoremstyle{plain}
\newtheorem{thm}{\protect\theoremname}
  \theoremstyle{plain}
  \newtheorem{lem}[thm]{\protect\lemmaname}
\renewcommand{\fnum@figure}{FIG.~\thefigure}
  \providecommand{\lemmaname}{Lemma}
\providecommand{\theoremname}{Theorem}
\begin{document}

\title{Optimized tomography of continuous variable systems using excitation
counting}

\author{Chao~Shen}

\email{chao.shen@yale.edu}

\affiliation{Department of Applied Physics, Yale University, New Haven, CT 06511,
USA}

\author{Reinier~W.~Heeres }

\affiliation{Department of Applied Physics, Yale University, New Haven, CT 06511,
USA}

\author{Philip~Reinhold}

\affiliation{Department of Applied Physics, Yale University, New Haven, CT 06511,
USA}

\author{Luyao~Jiang}

\affiliation{Department of Physics, Yale University, New Haven, CT 06511, USA}

\author{Yi-Kai~Liu}

\affiliation{Joint Center for Quantum Information and Computer Science (QuICS),
University of Maryland, College Park, MD 20742, USA}

\affiliation{National Institute of Standards and Technology (NIST), Gaithersburg,
MD 20899, USA}

\author{Robert~J.~Schoelkopf}

\affiliation{Department of Applied Physics, Yale University, New Haven, CT 06511,
USA}

\affiliation{Department of Physics, Yale University, New Haven, CT 06511, USA}

\author{Liang~Jiang}

\affiliation{Department of Applied Physics, Yale University, New Haven, CT 06511,
USA}

\affiliation{Department of Physics, Yale University, New Haven, CT 06511, USA}
\begin{abstract}
\textcolor{black}{We propose a systematic procedure to optimize quantum
state tomography protocols for continuous variable systems based on
excitation counting preceded by a displacement operation. Compared
with conventional tomography based on Husimi or Wigner function measurement,
the excitation counting approach can significantly reduce the number
of measurement settings. We investigate both informational completeness
and robustness, and provide a bound of reconstruction error involving
the condition number of the sensing map. We also identify the measurement
settings that optimize this error bound, and demonstrate that the
improved reconstruction robustness can lead to an order-of-magnitude
reduction of estimation error with given resources. This optimization
procedure is general and can incorporate prior information of the
unknown state to further simplify the protocol. }
\end{abstract}
\maketitle
\global\long\def\bra#1{\left<#1\right|}
\global\long\def\ket#1{\left|#1\right>}
\global\long\def\overlap#1#2{\left<#1\mid#2\right>}
\global\long\def\norm#1{\left\Vert #1\right\Vert }
\global\long\def\matelem#1#2#3{\left<#1\right|#2\left|#3\right>}
\global\long\def\trace#1{\text{tr}#1}

\section{Introduction}

Quantum state tomography (QST) is a powerful procedure to completely
characterize quantum states, which can be extended to quantum process
tomography for general quantum operations. However, QST is often resource
consuming, involving preparation of a large number of identical unknown
states and measurement of a large set of independent observables.
For qubit systems, many techniques have been developed to reduce the
cost of full state tomography, such as compressed sensing \cite{CS_Gross_prl,CS_Flammia_njp,Kalev_npjquant2015},
permutationally invariant tomography \cite{permu_invar}, self-guided
or adaptive tomography \cite{self_guided_qubits,adaptive_tomo}, and
matrix product states tomography \cite{mps_tomo}. \textcolor{black}{In
contrast, for continuous variable (CV) systems that also play an important
role in quantum information, the standard techniques in use today
are decades old, namely homodyne measurement \cite{homodyne_wigner,info_completeness}
for optical photons and direct Wigner function measurement \cite{direct_wigner_Lutterbach,direct_wigner_Haroche,CV_cat}
for cavity QED. With the rapid development in CV quantum information
processing, ranging from arbitrary state preparation \cite{direct_wigner_Martinis}
to universal quantum control \cite{krastanov,Reineer_PRL_15} and
from engineered dissipation \cite{cat_encoding_njp,engin_dissip}
to quantum error correction \cite{cat_QEC_prl,Ofek_nature16}, a large
dimension of Hilbert space can be coherently controlled in experiments
\cite{CV_cat,Wang_science16}. However, homodyne measurement might
not be immediately applicable due to intrinsic nonlinearity preventing
applying a very large displacement in cavity QED, and Wigner function
measurement requires intensive data collection \cite{Wang_science16}.
Thus there is an urgent need for reliable and efficient tomography
for CV systems.}

\textcolor{black}{}

\textcolor{black}{There have been significant advances in excitation
counting over various physical platforms, including optical photons
\cite{opt_photon_num_Nam}, microwave photons \cite{book_Haroche,Haroche_1996,photon_num_haroche_nature,Martinis_qn_tomo_ring},
and phonons of trapped ions \cite{qn_tomo_Wineland,KihwanKim_Jarzynski2015,JHome}.
In particular, the capability of quantum non-demolition measurement
of microwave excitation number has been demonstrated with superconducting
circuits \cite{photon_num_cqed_nature}.} Tomography based on excitation
counting has also been theoretically proposed \cite{qn_tomo_welsch,Mancini_epl_1997}
and experimentally demonstrated with trapped ions and cavity or circuit
QED \cite{qn_tomo_Wineland,Martinis_qn_tomo_ring,Haroche_tomography_N}.
However, all these works only considered specific choices of measurement
settings (associated with certain displacement patterns), and mostly
restricted to the feasibility of tomography, without further investigating
the robustness against measurement noise to develop robust QST protocols
for CV systems.

\textcolor{black}{Motivated by these recent advances, we develop a
theoretical framework to investigate cost-effective QST protocols
for CV systems based on excitation counting. Conventional QST protocols
can be regarded as special cases collecting }\textit{\textcolor{black}{partial}}\textcolor{black}{{}
information of the excitation number distribution. For example, up
to a displacement, the Husimi $Q$ function can be regarded as the
probability of zero excitation, and the Wigner function can be obtained
from the difference between probabilities associated with even and
odd number of excitations. We expect more cost-effective QST by collecting
full population distributions upon various displacements using excitation
counting, which can be efficiently achieved in various CV systems
\cite{qn_tomo_Wineland,KihwanKim_Jarzynski2015,JHome,opt_photon_num_Nam,book_Haroche,Haroche_1996,photon_num_haroche_nature,photon_num_cqed_nature,Martinis_qn_tomo_ring}. }

The rest of the paper is organized as follows.\textcolor{black}{{} In
Sec. \ref{sec:IC_abstract}, we first provide a mathematical formulation
of QST based on displacements and excitation counting. We then consider
QST for a special class of quantum states in Sec. \ref{sec:QST-cats},
illustrating the advantage of excitation counting and introducing
the criterion of error robusteness in terms of the }\textit{\textcolor{black}{condition
number}}\textcolor{black}{{} (CN) of the sensing map in Sec. \ref{sec:Error-robustness}.
The main results on QST of a general unknown quantum state are presented
in Secs. \ref{sec:gen-states-IC} and \ref{sec:gen-state-robustness}.
In Sec. \ref{sec:noise-model}, the choice of optimization target
for different error models are analyzed.  We put our optimized scheme
to the test using simulated measurement records in Sec. \ref{sec:benchmarking}.
Section \ref{sec:generalizations} discusses possible generalizations
of the scheme. Finally, the conclusion is given in Sec. \ref{sec:Conclusion}. }

\section{Informational completeness \label{sec:IC_abstract}}

\textcolor{black}{Mathematically, QST solves the inversion problem
\[
A\cdot\vec{\rho}=\vec{b},
\]
where $\vec{\rho}$ is the unknown density matrix arranged as a vector,
$\vec{b}$ denotes all the measurement records, and $A$ is the sensing
matrix determined by the kind of measurements performed. }The set
of measurements should be \textit{informationally complete} (IC),
that is, the sensing matrix $A$ should be invertible %
\footnote{Unless there are additional constraints to $\rho$ so that other methods
like compressed sensing may apply%
}. For a non-square sensing matrix, the unknown state can be reconstructed
using least-squares fitting, 
\[
\vec{\rho}=\tilde{A}^{-1}\vec{b}=\left(A^{\dagger}A\right)^{-1}A^{\dagger}\vec{b}.
\]
\textcolor{black}{Due to experimental noise, the least-squares solution
may turn out non-physical, i.e., having negative eigenvalues. This
can be fixed by finding the physical density matrix $\sigma$ that
is closest to $\rho$, with the distance defined by some matrix norm,
e.g., the Frobenius norm. A justification of this procedure is provided
in Appendix \ref{sec:appdx-physical-rho}. }

\textcolor{black}{For CV systems, each measurement setting is associated
with a displacement operation $D\left(\beta\right)$. We may directly
count the excitation number after the displacement operation and obtain
the number distribution, which is called the }\textit{\textcolor{black}{generalized
Q function }}\textcolor{black}{($Q_{n}$ function)}\textit{\textcolor{black}{{}
}}\textcolor{black}{\cite{qn_tomo_welsch,Walmsley2009,Walmsley2012,photon_num_cqed_nature}}\textit{\textcolor{black}{{}
}}\textcolor{black}{
\begin{eqnarray*}
Q_{n}^{\beta}\left(\rho\right) & = & \mathrm{tr}\left[\ket n\bra nD(-\beta)\rho D^{\dagger}(-\beta)\right],
\end{eqnarray*}
where $n=0,\,1,\,2,\,\cdots,\, n_{c}$ with $n_{c}$ the maximal resolved
excitation number. }Reshaping $\rho$ into a column vector $\vec{\rho}$
we obtain the linear equation $\vec{Q^{\beta}}(\rho)=A^{\beta}\vec{\rho}$,
where $\vec{Q}^{\beta}(\rho)$ is a column vector with $(n_{c}+1)$
entries $Q_{n}^{\beta}(\rho)$ and the matrix $A^{\beta}$ has $(n_{c}+1)$
rows. Multiple measurement settings associated with a set of displacements
$\left\{ \beta_{1},\,\beta_{2},\,\cdots,\,\beta_{N_{\beta}}\right\} $
are used to constrain the inversion problem. The measurement record
$\vec{b}$ is then a column vector with $N_{\beta}\cdot(n_{c}+1)$
entries of $Q_{n}^{\beta_{j}}\left(\rho\right)$; the sensing matrix
$A$ can be obtained by stacking $A^{\beta_{i}}$ , with a total of
$N_{\beta}\cdot(n_{c}+1)$ rows. The basis under which $\rho$ is
written can be arbitrary, e.g. Fock basis $\ket{m_{1}}\bra{m_{2}}$
or coherent-state basis $\ket{\alpha_{i}}\bra{\alpha_{j}}$. 

\textcolor{black}{In comparison, the sensing matrix for standard QST
with the Husimi $Q$ function $Q_{n=0}^{\beta}\left(\rho\right)=\bra{\beta}\rho\ket{\beta}$
or Wigner function $W^{\beta}\left(\rho\right)=\sum_{n}\left(-1\right)^{n}Q_{n}^{\beta}\left(\rho\right)$
consists of only} $N_{\beta}$ rows {[}which are linear combinations
of $N_{\beta}\cdot(n_{c}+1)$ rows of the sensing matrix associated
with \textcolor{black}{the $Q_{n}$ function }%
\footnote{\textcolor{black}{Actually, in some experiments the Wigner function
was obtained from $Q_{n}^{\beta}$ \cite{direct_wigner_Martinis,photon_num_cqed_nature}.}%
}\textcolor{black}{{]}, which neglect a large portion of potentially
useful information. In the following, we consider QST for a class
of quantum states and show that the neglected information can be crucial.}

\section{\textcolor{black}{QST for cat states}}

\textcolor{black}{\label{sec:QST-cats}Cat states are quantum states
characterized by density matrix $\rho=\sum_{i,j=1}^{p}\rho_{ij}\ket{\alpha_{i}}\bra{\alpha_{j}}$,
where $\ket{\alpha_{i}}$ are well-separated coherent states \cite{QC_map_PRA_2013}.
The Schrödinger cat state $\ket{\alpha}+\ket{-\alpha}$ is a well-known
example. Standard QST of cat states with large unknown $\alpha$'s
is resource consuming and requires many measurement settings. In particular,
both the Husimi and Wigner function measurement schemes encounter
the challenge of unknown $\alpha$'s, and have to deploy many measurement
settings to scan various displacements, the majority of which is unfortunately
wasted because $Q^{\beta}(\rho)\approx W^{\beta}(\rho)\approx0$ for
most choices of $\beta$. In contrast, the $Q_{n}$ function measurement
always generates an excitation distribution, from which we can estimate
the distances $\left|\alpha_{i}-\beta\right|$ for different $\beta$.
Using the idea of trilateration, we can estimate all $\alpha$'s using
about }\textit{\textcolor{black}{three}}\textcolor{black}{{} measurement
settings. }Using the data $Q_{n}^{\beta}(\rho)$ for $\{\beta_{1},\,\beta_{2},\,\beta_{3}\}$,
we can estimate the density matrix $\tilde{\rho}$ using the iterative
maximum likelihood estimation (iMLE) technique \cite{iMLE} and calculate
the corresponding Husimi $Q$ function {[}see Fig. \ref{fig:iMLE-demon}
(b){]}. To increase confidence, one can additionally measure $Q_{n}^{\beta}(\rho)$
at one or two $\beta's$, preferrably at the current estimated $\alpha_{i}'s$
{[}see Fig. \ref{fig:iMLE-demon} (c), (d){]}. If the true state is
not a cat state, we would not see clearly separated population patches
in the phase space and need to treat it as a general state. 

\begin{figure*}
\centering{}\includegraphics{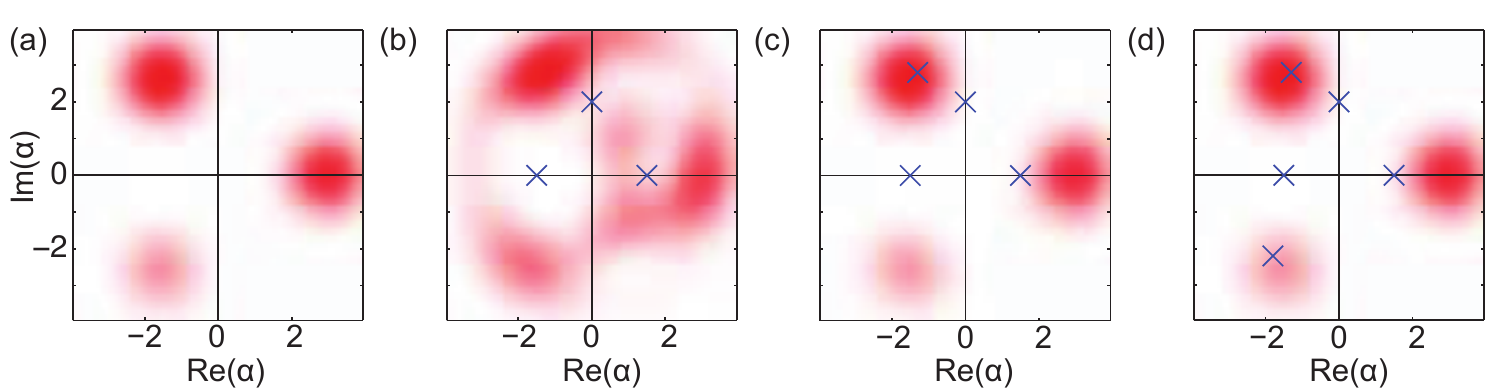}\protect\caption{Procedure of estimating the $\alpha_{i}$ via Husimi $Q$ function.
(a) shows the true $Q$ function of the state; (b) shows the estimated
$Q$ function via iMLE after measuring $Q_{n}^{\beta}(\rho)$ at three
$\beta's$ shown as the crosses; (c)/(d) are estimations after measuring
at four and five $\beta's$, respectively. Apparently the estimate
in (c) already converges to the true $Q$ function shown in (a). \label{fig:iMLE-demon}}
\end{figure*}

\textcolor{black}{Once the $\alpha's$ are known, the generalized
$Q$ function measurement only requires }\textit{\textcolor{black}{one
additional}}\textcolor{black}{{} measurement setting to fulfill the
IC requirement, independent of the number of coherent components.
It is note-worthy that examples where tomography requires only one
measurement setting are extremely rare. This observation can be justified
by the relation
\begin{eqnarray*}
Q_{n}^{\beta}(\rho) & = & \sum_{i,j=1}^{p}\rho_{ij}Q_{n}^{\beta}\left(\ket{\alpha_{i}}\bra{\alpha_{j}}\right)\\
 & = & \sum_{i,j=1}^{p}\rho_{ij}\mathrm{tr}\left[\ket n\bra nD(-\beta)\ket{\alpha_{i}}\bra{\alpha_{j}}D^{\dagger}(-\beta)\right]\\
 & = & \sum_{i,j=1}^{p}\rho_{ij}e^{i\theta(\beta,\alpha_{i},\alpha_{j})}e^{-\frac{1}{2}\left(\left|\alpha_{i}-\beta\right|-\left|\alpha_{j}-\beta\right|\right)^{2}}\\
 &  & \times\frac{1}{n!}\left[(\alpha_{i}-\beta)(\alpha_{j}-\beta)^{*}\right]^{n}e^{-\left|\alpha_{i}-\beta\right|\cdot\left|\alpha_{j}-\beta\right|}\\
 & = & \sum_{i,j=0}^{p}\tilde{\rho}_{ij}\frac{1}{n!}\left[d_{i}d_{j}e^{i\phi_{ij}}\right]^{n},
\end{eqnarray*}
where we defined 
\begin{eqnarray*}
d_{i} & \equiv & \left|\alpha_{i}-\beta\right|,\\
\phi_{ij} & \equiv & \arg(\alpha_{i}-\beta)-\arg(\alpha_{j}-\beta),\\
\theta(\beta,\alpha_{i},\alpha_{j}) & \equiv & -i(-\beta\alpha_{i}^{*}+\beta^{*}\alpha_{i}-\alpha_{j}\beta^{*}+\alpha_{j}^{*}\beta)/2,\\
\tilde{\rho}_{ij} & \equiv & e^{i\theta(\beta,\alpha_{i},\alpha_{j})}e^{-\frac{1}{2}(d_{i}-d_{j})^{2}}e^{-d_{i}d_{j}}\rho_{ij}.
\end{eqnarray*}
 Reshaping $\tilde{\rho}_{ij}$ as a column vector, we have
\[
\left(\begin{array}{cccc}
1 & \cdots & 1 & \cdots\\
\vdots & \ddots & \vdots\\
d_{1}^{2n} & \cdots & (d_{i}d_{j}e^{i\phi_{ij}})^{n}\\
\vdots &  & \vdots & \ddots
\end{array}\right)\left(\begin{array}{c}
\tilde{\rho}_{11}\\
\vdots\\
\tilde{\rho}_{ij}\\
\vdots
\end{array}\right)=\left(\begin{array}{c}
0!Q_{0}^{\beta}\\
\vdots\\
n!Q_{n}^{\beta}\\
\vdots
\end{array}\right).
\]
The matrix on the left-hand side is a Vandermonde matrix, having full
column rank (all column vectors are independent and $A^{\dagger}A$
is invertible) if and only if all $d_{i}d_{j}e^{i\phi_{ij}}$ are
distinct. }Under the following conditions, all the $d_{i}d_{j}e^{i\phi_{ij}}$
are distinct:\textcolor{black}{{} (i) $d_{i}\neq d_{j}$, other wise
the columns corresponding to $\tilde{\rho}_{ii}$ and $\tilde{\rho}_{jj}$
would be identical;; (ii) $\phi_{ij}\neq0,\,\pi$, otherwise the columns
$\tilde{\rho}_{ij}$ and $\tilde{\rho}_{ji}$ would be identical;
and (iii) $d_{i}d_{j}\neq d_{k}d_{l}$ or $\phi_{ij}\neq\phi_{kl}$
where all of $i,\, j,\, k,\, l$ are assumed to be distinct. These
requirements have clear geometric interpretations: (i) $\beta$ does
not lie on the perpendicular bisector of the line segment $\alpha_{i}\alpha_{j}$;
(ii) $\beta$, $\alpha_{i}$, $\alpha_{j}$ are not collinear; and
(iii) triangles formed by $(\beta,\,\alpha_{i},\,\alpha_{j})$ and
$(\beta,\,\alpha_{k},\,\alpha_{l})$ do not have the same area or
the angles subtended by the segments $\overline{\alpha_{i}\alpha_{j}}$
and $\overline{\alpha_{k}\alpha_{l}}$ from $\beta$ are different.
There is in fact one extra soft requirement, due to the factor $e^{-\frac{1}{2}(d_{i}-d_{j})^{2}}$
in $Q_{n}^{\beta}\left(\ket{\alpha_{i}}\bra{\alpha_{j}}\right)$.
When $d_{i}\ll d_{j}$ or $d_{i}\gg d_{j}$, $\rho_{ij}$ gets exponentially
suppressed and almost vanishes from the sensing equation, just like
the case with the conventional Husimi $Q$ function. So we add one
requirement (iv) $\beta$ does not lie far away from the bisector
of $\alpha_{i}\alpha_{j}$ in the sense that $e^{-\frac{1}{2}(d_{i}-d_{j})^{2}}$
is not too small. Requirement (iv) is closely related to the error
robustness which will be discussed later.} \textcolor{black}{The $Q_{n}$
function at one suitable $\beta$ contain sufficient information.}
More specifically, the diagonal terms in the density matrix $\rho_{ii}$
(the population of $\ket{\alpha_{i}}$) can be extracted from the
envelope of the distribution, while the off-diagonal terms $\rho_{i,j}$
can be obtained from the interference signals peaked at $\bar{n}=d_{i}d_{j}$
in the distribution. Therefore, sampling the excitation number distribution
can boost the information gain and thus reduce the measurement settings
significantly.

\section{Error robustness of reconstruction\label{sec:Error-robustness}}

So far, we have only considered the requirement for the IC, or possibility
of reconstruction. We do not yet know the accuracy of the reconstruction
when measurements are noisy. Next, we investigate robustness and estimate
the reconstruction error. Assume that the measurements $\vec{b}$
have noise $\delta\vec{b}$, leading to noise in the solution $\tilde{A}^{-1}\delta\vec{b}$.
To bound the noise in the solution, we consider the worst-case noise
magnification ratio
\[
\kappa(A)\equiv\frac{\left\Vert \tilde{A}^{-1}\delta\vec{b}\right\Vert /\left\Vert \tilde{A}^{-1}\vec{b}\right\Vert }{\left\Vert \delta\vec{b}\right\Vert /\left\Vert \vec{b}\right\Vert },
\]
which is called the CN of $A$ \cite{matrix_analysis}. The CN is
a property of the sensing map and does not depend on the specific
procedure that solves the linear equations. In principle the norm
can be chosen arbitrarily. We will use the  two-norm $\left\Vert \bullet\right\Vert _{2}$
of vectors, because in this case the CN is simply the ratio of the
largest and smallest singular values of $A$ \cite{matrix_analysis}.
Clearly $\kappa(A)\ge1$ and when $\kappa(A)=1$ the sensing map is
isometric (distance preserving). The CN has been introduced as a measure
of robustness of reconstruction schemes for qubit systems \cite{cond_num_Russian,opt_two_qbt_tomo_cond_num,opt_nuc_spin_cond_num}.
Using Uhlmann's definition 
\[
F(\rho,\,\sigma)=\text{Tr}\left[\sqrt{\sqrt{\rho}\sigma\sqrt{\rho}}\right],
\]
 the reconstruction fidelity can be bounded as (see Appendix \ref{sec:appdx-error-bound}
for a proof) 
\begin{equation}
F(\rho,\,\rho+\delta\rho)\ge1-\frac{1}{2}\kappa(A)\sqrt{r}\left\Vert \rho\right\Vert _{F}\left\Vert \delta\vec{b}\right\Vert _{2}/\left\Vert \vec{b}\right\Vert _{2},\label{eq:error_bound}
\end{equation}
where $r$ is the rank of $\delta\rho$ bounded by the system dimension,
and $\left\Vert \rho\right\Vert _{F}$ is the Frobenius norm of the
true density matrix which is fixed. Assuming for now that $\left\Vert \delta\vec{b}\right\Vert _{2}/\left\Vert \vec{b}\right\Vert _{2}$
is fixed (e.g., due to systematic bias), a robust QST should minimize
CN to have an optimal guarantee of the reconstruction fidelity. Note
that a lower CN reduces the sample complexity but not the computational
complexity of the inversion problem. 

\begin{figure}
\centering{}\includegraphics{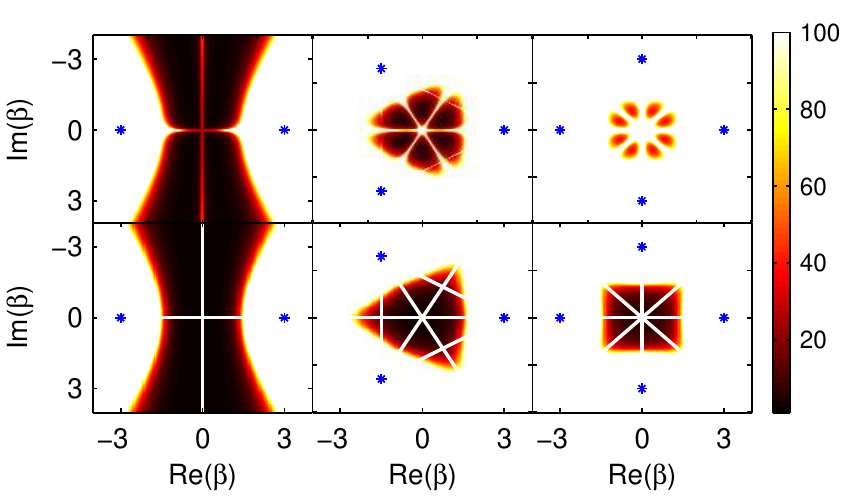}\protect\caption{(Color online)\textcolor{black}{{} Condition number of the sensing map
as a function of $\beta$ for cat states with number of components
$p=2,\,3,\,4$. Upper panels: numerical results for CN; Lower panels:
a simple estimate of the CN using the expression $\kappa(\beta)\sim\sum_{i,j}\exp\left[\left(d_{i}-d_{j}\right)^{2}/2\right]$
where $d_{i}\equiv\left|\alpha_{i}-\beta\right|$. We also included
the white lines on which the sensing map is strictly informationally
incomplete (see main text). Blue stars indicate the positions of the
coherent components $\protect\ket{\alpha_{i}}$. For visual clarity,
values beyond 100 are all mapped to white. The minimum CNs achievable
for the three cases are 1.74, 6.81, and 38.64 (numerical results),
respectively. Here the maximal resolved excitation number $n_{c}$
is taken sufficiently large. If $n_{c}$ decreases,} CN for large
$\left|\beta\right|$ gets worse. \label{fig:kappa_vs_beta}}
\end{figure}

\textcolor{black}{We now use CN to examine the robustness of QST for
cat states, for which CN is a function of one complex variable $\beta$.
Due to the factor $e^{-\frac{1}{2}(d_{i}-d_{j})^{2}}$ in $Q_{n}^{\beta}\left(\ket{\alpha_{i}}\bra{\alpha_{j}}\right)$,
when $d_{i}\ll d_{j}$ or $d_{i}\gg d_{j}$, $\rho_{ij}$ gets exponentially
suppressed, just like the case with the Husimi $Q$ function.} In
those regions, the factor $\exp\left[\left(d_{i}-d_{j}\right)^{2}/2\right]$
would magnify the noise during the reconstruction. Thus we estimate
\textcolor{black}{
\[
\kappa(\beta)\sim\sum_{i,j}\exp\left[\left(d_{i}-d_{j}\right)^{2}/2\right],
\]
}which agrees well with the numerical calculation of CN, as illustrated
in Fig.~\ref{fig:kappa_vs_beta}. Different from the requirement
for IC, CN depends on the number of coherent components $p$, the
values of $\alpha_{i}$, and the choice of $\beta$. For small $p$,
there exist low-CN regions of $\beta$ (dark regions in Fig.~\ref{fig:kappa_vs_beta}),
which imply that the protocol with only about four measurement settings
(about three for trilateration and one for coherences) can be robust. 

These low-CN regions are very similar to the regions with high Fisher
information in the worst case. For the state $\rho=\sum_{i,j=1}^{p}\rho_{ij}\ket{\alpha_{i}}\bra{\alpha_{j}}$
with known $\alpha_{i}$, the parameters to estimate are $\rho_{ij}$.
For convenience we arrange the $p^{2}$ numbers as a vector $\vec{\rho}$.
For a certain measurement position $\beta$, we can get a distribution
\[
f(n)\equiv Q_{n}^{\beta}(\vec{\rho}).
\]
According to the definition, the Fisher information matrix is 
\begin{eqnarray*}
\mathcal{I}(\vec{\rho}) & = & \mathbb{E}_{\vec{\rho}}\left[\left(\frac{\partial}{\partial\vec{\rho}}\log f(n)\right)\cdot\left(\frac{\partial}{\partial\vec{\rho}}\log f(n)\right)^{\dagger}\right]\\
 & = & \sum_{n=0}^{\infty}\frac{1}{f(n)}\left(\frac{\partial}{\partial\vec{\rho}}f(n)\right)\cdot\left(\frac{\partial}{\partial\vec{\rho}}f(n)\right)^{\dagger},
\end{eqnarray*}
where 
\[
\frac{\partial f}{\partial\rho_{ij}}=Q_{n}^{\beta}\left(\ket{\alpha_{i}}\bra{\alpha_{j}}\right).
\]

Notice that $\mathcal{I}(\vec{\rho})$ is a matrix-valued function
depending on the true state specified by $\vec{\rho}$. We use the
determinant of $\mathcal{I}(\vec{\rho})$ as a one-parameter measure
of the information contained in the measurement $Q_{n}^{\beta}(\rho)$
and plot $\det\mathcal{I}(\vec{\rho})$ as a function of $\beta$
for a few different $\vec{\rho}$ (see Fig. \ref{fig:fisher_info}). 

\begin{figure*}
\centering{}\includegraphics{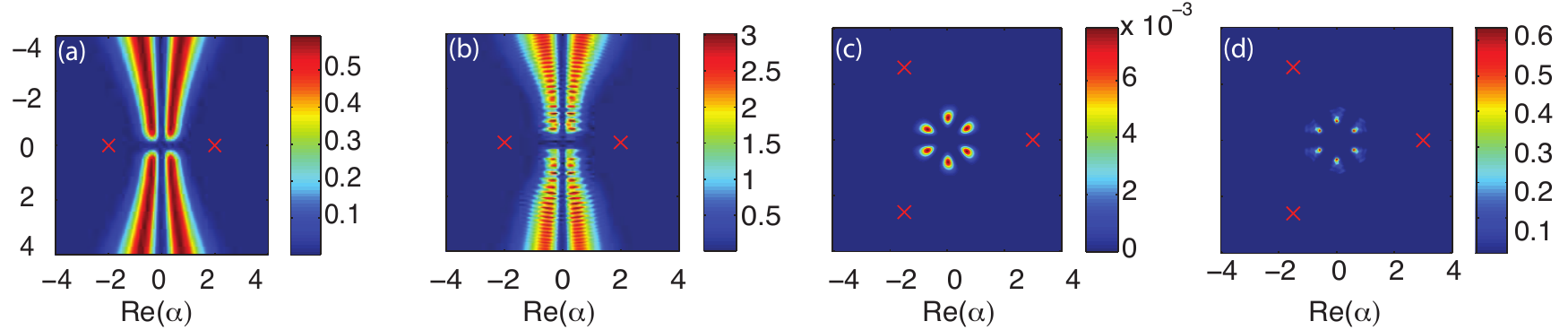}\protect\caption{Determinant of the Fisher information $\mathcal{I}(\vec{\rho})$ as
a function of $\beta$ for four different states. (a) Two-component
maximally mixed cat state, $\rho_{ij}\propto\delta_{ij}$. In other
words, the Bloch vector for the effective two level system is $\vec{0}$.
(b) A two-component cat state, with Bloch vector $0.9\cdot(1,\,1,\,0)/\sqrt{2}$.
(c) Three-component maximally mixed cat state, $\rho_{ij}\propto\delta_{ij}$.
(d) A mixture $\rho=(1-\lambda)I/3+\lambda\protect\ket{\psi}\protect\bra{\psi}$
where $I$ is the identity and $\protect\ket{\psi}=(1,\,1,\,1)^{\dagger}\sqrt{3}$.
The shape of the good detection region for maximally mixed states
is very similar to that predicted by the condition number while for
higher purity states additional ``interference fringes'' appear.
The worst case of Fisher information over all true states appears
to be that of the maximally mixed states. The good regions for $\beta$
predicted by worst-case Fisher information agree well with that given
by condition number. \label{fig:fisher_info}}
\end{figure*}
 This justifies the use of CN as a guide for optimizing measurement
schemes, which is much easier to calculate than the worst-case Fisher
information. For larger $p$ or general states, we need to consider
multiple measurement settings and optimized choices of $\beta's$
as discussed below.

\section{Informational completeness for general states \label{sec:gen-states-IC}}

We now consider general states with no structure other than an excitation
number cutoff $m_{c}$. To achieve IC, we need $N_{\beta}=(m_{c}+1)$
different $\beta's$ as argued below. In the Fock basis, \textcolor{black}{$\rho=\sum_{m_{1},m_{2}=0}^{m_{c}}\rho_{m_{1},m_{2}}\ket{m_{1}}\bra{m_{2}}$,
and} for each term $\ket{m_{1}}\bra{m_{2}}$ 

\begin{eqnarray*}
 &  & Q_{n}^{\beta}\left(\ket{m_{1}}\bra{m_{2}}\right)\\
 & = & \frac{\left|\beta\right|^{2n}e^{-\left|\beta\right|^{2}}}{n!}\frac{\sqrt{m_{1}!m_{2}!}}{(-\beta)^{m_{1}}(-\beta^{*})^{m_{2}}}\mathcal{L}_{m_{1}}^{n-m_{1}}(\left|\beta\right|^{2})\mathcal{L}_{m_{2}}^{n-m_{2}}(\left|\beta\right|^{2}),
\end{eqnarray*}
where $\mathcal{L}_{m}^{n}(x)$ is the associated Laguerre polynomial.
Note that $\mathcal{L}_{m}^{n}(x)$ is not only a polynomial of degree
$m$ in $x$ but also a polynomial of degree $m$ in $n$. Apart from
the factor $\frac{\left|\beta\right|^{2n}e^{-\left|\beta\right|^{2}}}{n!}$,
$Q_{n}^{\beta}\left(\ket{m_{1}}\bra{m_{2}}\right)$ is a polynomial
of degree $(m_{1}+m_{2})$ in $n$. Since $Q_{n}^{\beta}(\rho)$ has
a degree of $2m_{c}$ in $n$, experimental values of of $Q_{n}^{\beta}(\rho)$
for each $\beta$ provide $(2m_{c}+1)$ real coefficients, 
\[
Q_{n}^{\beta}(\rho)=\sum_{k=0}^{2m_{c}}n^{k}\cdot c_{k}^{\beta}.
\]
 The dependence of $c_{k}^{\beta}$ on $\rho_{m_{1}m_{2}}$ is shown
below (omitting $\beta$ superscript on $c_{k}$ ):

\begin{eqnarray*}
c_{2m_{c}} & \sim & \rho_{m_{c},m_{c}}\\
c_{2m_{c}-1} & \sim & \rho_{m_{c},m_{c}},\rho_{m_{c}-1,m_{c}},\rho_{m_{c},m_{c}-1}\\
 & \vdots\\
c_{m_{c}} & \sim & \rho_{0,m_{c}},\rho_{1,m_{c}-1},\,\cdots,\rho_{m_{c},0}\ \text{and all above}\\
c_{m_{c}+1} & \sim & m_{c}\ \text{new terms and all above}\\
 & \vdots\\
c_{0} & \sim & \text{all variables above}.
\end{eqnarray*}
For example, knowledge of $c_{2m_{c}}$ directly reveals $\rho_{m_{c},m_{c}}$
and $c_{2m_{c}-1}$ gives a linear equation involving $\rho_{m_{c},m_{c}-1}$,
$\rho_{m_{c}-1,m_{c}}$ and $\rho_{m_{c},m_{c}}$ which is already
obtained from $c_{2m_{c}}$. After experimentally obtaining $Q_{n}^{\beta_{1}}(\rho)$
and $Q_{n}^{\beta_{2}}(\rho)$, the values of $\rho_{m_{c},m_{c}-1}$
and $\rho_{m_{c}-1,m_{c}}$ can be determined. Continuing this way
we can determine all of $\rho_{m_{1},m_{2}}$ after measuring $Q_{n}^{\beta}(\rho)$
for $(m_{c}+1)$ $\beta's$. This analysis is similar to that done
in \cite{info_completeness}.

\section{Error robustness for general states \label{sec:gen-state-robustness}}

It is convenient to consider the covariance matrix, 
\[
C\equiv A^{\dagger}A=\sum_{j}A_{\beta_{j}}^{\dagger}A_{\beta_{j}},
\]
and $\kappa(C)=\kappa(A)^{2}$. The element $C_{(m_{1}m_{2}),(n_{1}n_{2})}$
is the overlap of the columns of $A$ corresponding to $\ket{m_{1}}\bra{m_{2}}$
and $\ket{n_{1}}\bra{n_{2}}$. In the ideal case, where $\kappa(A)=1$
and $A$ is an isometry, $C$ should be proportional to the identity
matrix. Using 
\begin{eqnarray*}
A_{(n,\beta),(m_{1},m_{2})} & = & tr\left[D(\beta)\ket n\bra nD(-\beta)\ket{m_{1}}\bra{m_{2}}\right]\\
 & = & e^{-\left|\beta\right|^{2}}\frac{1}{n!}\left|\beta\right|^{2n}\frac{\sqrt{m_{1}!}}{(-\beta)^{m_{1}}}\mathcal{L}_{m_{1}}^{n-m_{1}}\left(\left|\beta\right|^{2}\right)\\
 &  & \times\frac{\sqrt{m_{2}!}}{\left[(-\beta)^{m_{2}}\right]^{*}}\mathcal{L}_{m_{2}}^{n-m_{2}}\left(\left|\beta\right|^{2}\right),
\end{eqnarray*}
we see that 
\[
A_{(n,\beta),(m_{1},m_{2})}\propto\beta^{m_{2}-m_{1}}g_{m_{1}m_{2}}(\left|\beta\right|),
\]
and 
\begin{eqnarray*}
C_{(m_{1}m_{2}),(n_{1}n_{2})} & = & \sum_{n,j}A_{(n,\beta_{j}),(m_{1},m_{2})}^{*}A_{(n,\beta_{j}),(n_{1},n_{2})}\\
 & \propto & \sum_{\beta_{j}}\beta_{j}^{m_{1}-m_{2}-n_{1}+n_{2}}f_{m_{1},m_{2},n_{1},n_{2}}(\left|\beta_{j}\right|),
\end{eqnarray*}
where $g$ and $f$ are real functions that do not have dependence
on the complex argument of $\beta's$. Note the convenient fact that
the matrix $C$ is additive for parts corresponding to different $\beta's$.
Consider a set of $\beta's$ with the same magnitude, $\beta_{j}=\left|\beta\right|e^{i\phi_{j}}$.
Partitioning the indices $(m_{1}m_{2})$ and $(n_{1}n_{2})$ into
groups according to $k_{1}\equiv m_{1}-m_{2}$ and $k_{2}\equiv n_{1}-n_{2}$,
$C$ has a block structure $C=\left[C_{k_{1}k_{2}}\right]$, where
elements of the block $C_{k_{1}k_{2}}$ are proportional to $\sum_{j}e^{-i(k_{1}-k_{2})\phi_{j}}$. 

Both intuitively and rigorously, eliminating the off-diagonal blocks
with $k_{1}\ne k_{2}$ would reduce the condition number. This is
also known as ``pinching'' in matrix analysis (see also Appendix
\ref{sec:appdx-rings}). We may use $N_{\beta}=(2m_{c}+1)$ measurement
settings with $\beta's$ evenly distributed over a circle with 
\[
\phi_{j}=\frac{2\pi}{2m_{c}+1}j,\ \text{for }j=0,\,1,\,\cdots,\,2m_{c},
\]
which is denoted as ``full-ring configuration'' or FRC, as shown
in the inset of Fig.~\ref{fig:betas_general_states}. As pointed
out in Appendix \ref{sec:appdx-rings}, the multiple-full-ring configuration
(MFRC) should be optimal. However we observed numerically \textcolor{red}{{}
}that the improvement of MFRC over the FRC with optimal ring radius
is extremely small or even zero. Denote the covariance matrix for
a ring of $(2m_{c}+1)$ $\beta's$ with radius $r$ as $C_{r}$. We
compared $\min_{r}\kappa(C_{r})$ and $\min_{r_{1},r_{2}}\kappa(C_{r_{1}}+C_{r_{2}})$.
For $m_{c}=1$ we found a 1.6\% difference and for $m_{c}\ge2$ (tested
up to 7) they are equal. We thus conjecture that FRC is the optimal
configuration for $m_{c}\ge2$. The number of $\beta's$ required
for MFRC is at least twice as large as that of FRC. So practically
FRC is much more efficient than MFRC. 

\begin{figure}
\centering{}\includegraphics{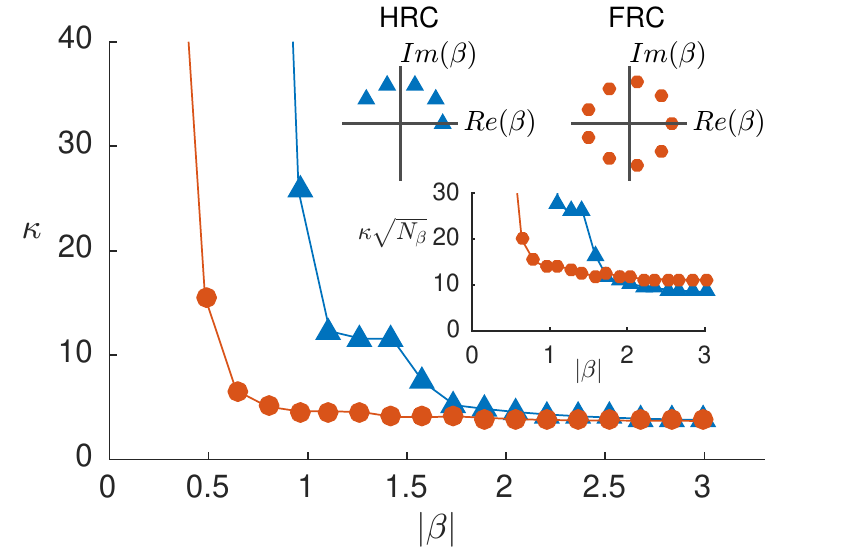}\protect\caption{(Color online) Main panel: Condition numbers of full-ring configuration
(FRC) and half-ring configuration (HRC) as a function of the ring
radius ($m_{c}=4$ case). Top two insets: FRC and HRC in phase space.
For both schemes, $\beta_{j}=\left|\beta\right|e^{i\phi_{j}}$. FRC:
$\phi_{j}=\frac{2\pi}{2m_{c}+1}j$, $j=0,\,1,\,2,\,\cdots,\,2m_{c}$.
HRC: $\phi_{j}=\frac{\pi}{2m_{c}+1}j$, $j=0,\,1,\,2,\,\cdots,\, m_{c}$.
The condition number of HRC approaches that of FRC as $\left|\beta\right|$
gets large, as predicted by theory. Bottom inset:\textbf{ }Figure
of merit $\kappa\sqrt{N_{\beta}}$ for HRC and FRC. \label{fig:betas_general_states}}
\end{figure}

Strictly speaking, with a smaller $N_{\beta}$ it is not possible
to fully pinch matrix $C$, i.e. satisfying 
\[
\sum_{j}e^{-i(k_{1}-k_{2})\phi_{j}}\propto\delta_{k_{1}k_{2}},
\]
for all $k_{1}$, $k_{2}$. This justifies the ring based configurations
used in \cite{qn_tomo_welsch,qn_tomo_Wineland,Martinis_qn_tomo_ring}.
Numerically, however, we find that for large $\left|\beta\right|$,
the number of measurement settings can be further reduced from $2m_{c}+1$
to $m_{c}+1$ without compromising CN, as illustrated in Fig.~\ref{fig:betas_general_states}.
The optimized $\beta$'s are evenly distributed over half a circle,
with 
\[
\phi_{j}=\frac{\pi}{m_{c}+1}j,\ \text{for }j=0,\,1,\,\cdots,\, m_{c},
\]
which is denoted as ``half-ring configuration'' or HRC, as shown
in the inset of Fig.~\ref{fig:betas_general_states}. For even $m_{c}$,
the configuration $\phi_{j}=\frac{2\pi}{m_{c}+1}j$, for $j=0,\,1,\,\cdots,\, m_{c}$,
works as well. The justification of HRC lies in the special asymptotic
behavior of matrix $C$. As $\left|\beta\right|$ gets large, the
off-diagonal blocks of $C_{k_{1},k_{2}}$ with odd $k_{1}-k_{2}$
scale as $1/\left|\beta\right|^{2}$, negligible compared to those
$C_{k_{1},k_{2}}$ with even $k_{1}-k_{2}$ which scales as $1/\left|\beta\right|$
(see Appendix \ref{sec:appdx-proof-thm} for a proof). So nearly half
of those off-diagonal blocks are automatically pinched and we only
need to have 
\[
\sum_{j}e^{-i(k_{1}-k_{2})\phi_{j}}\propto\delta_{k_{1}k_{2}},\ \text{for even }k_{1}-k_{2},
\]
to fully pinch $C$, which can be achieved using $m_{c}+1$ settings.
Interestingly, the pinching analysis can be applied to Homodyne detection
(see Appendix \ref{sec:appdx-homodyne}) and we verified that the
intuitive choice of equally spaced phase angles is optimal. Furthermore,
we found that the matrix $C$ for $Q_{n}$ asymptotes to that of Homodyne
detection and so Homodyne detection can in some sense be seen as the
$Q_{n}$ detection with $\beta\rightarrow\infty$. 

We also performed numerical gradient-based optimization of $\kappa(A)$
over $\beta's$ with different $N_{\beta}$. The gradient of CN with
respect to $\beta's$ can be calculated using perturbation theory
(detailed in Appendix \ref{sec:appdx-gradient}). CN drops significantly
as $N_{\beta}$ increases to $m_{c}+1$ and does not improve further
when $N_{\beta}>m_{c}+1$. For each $N_{\beta}$ we initialize the
optimization with a large number of different configurations of $\beta's$
and HRC turns out the best (with the exception of the case $m_{c}=1$).
As a function of $m_{c}$, the asymptotic CN grows slowly, $\kappa(A)\sim m_{c}^{1/2}$
(see Fig. \ref{fig:kappa_vs_mc_Qn}). 

\begin{figure}
\centering{}\includegraphics{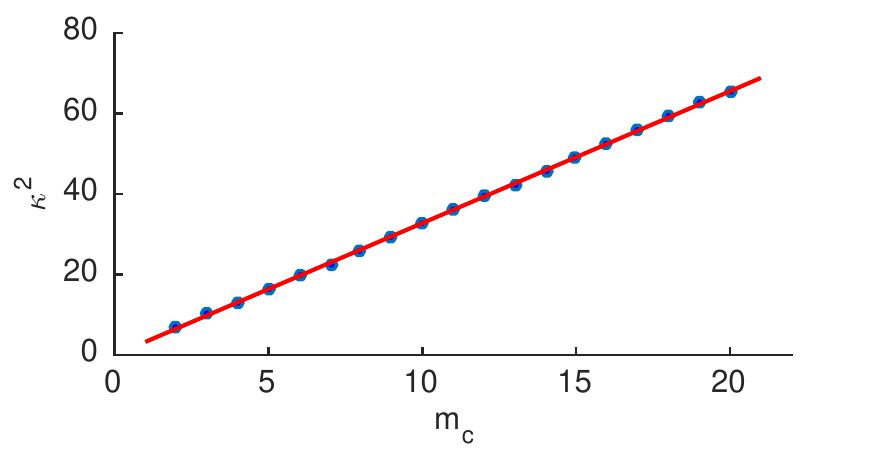}\protect\caption{Optimal condition number for $Q_{n}$ measurements as a function of
$m_{c}$. Vertical axis shows $\kappa(A)^{2}$. Red solid line shows
a linear fit with equation $\kappa^{2}=3.28m_{c}-0.07769$. \label{fig:kappa_vs_mc_Qn}}
\end{figure}

\section{Discussion of noise models \label{sec:noise-model}}

So far, we have assumed that $\left\Vert \delta\vec{b}\right\Vert _{2}/\left\Vert \vec{b}\right\Vert _{2}$
is fixed, and minimize $\kappa(A)$ to optimize the bound in Eq. (\ref{eq:error_bound}).
On the other hand,  $\left\Vert \delta\vec{b}\right\Vert _{2}/\left\Vert \vec{b}\right\Vert _{2}$
might be tunable. A practically relevant situation is shot noise,
with 
\[
\left\Vert \delta\vec{b}\right\Vert _{2}/\left\Vert \vec{b}\right\Vert _{2}\propto1/\sqrt{N_{rep}}.
\]
 Meanwhile, $\kappa(A)$ depends on the number of measurement settings
$N_{\beta}$. Given total number of measurements (or copies of unknown
states) $N_{tot}=N_{\beta}\cdot N_{rep}$, we need to minimize $\tilde{\epsilon}\equiv\kappa(A)\left\Vert \delta\vec{b}\right\Vert /\left\Vert \vec{b}\right\Vert $
to have a better bound. Hence, 
\[
\tilde{\epsilon}\propto\kappa(A)/\sqrt{N_{rep}}=\kappa(A)\sqrt{N_{\beta}/N_{tot}}
\]
 implies that we should minimize $\kappa(A)\sqrt{N_{\beta}}$. As
illustrated in the bottom inset of Fig.~\ref{fig:betas_general_states},
HRC has lower $\kappa(A)\sqrt{N_{\beta}}$ for large $\left|\beta\right|$,
and is more robust than FRC in that regime.  In terms of scaling
with $m_{c}$, 
\[
\kappa(A)\sqrt{N_{\beta}}\sim m_{c}^{1/2}\sqrt{m_{c}+1}\sim m_{c}
\]
 for HRC and FRC while $\kappa(A)\sqrt{N_{\beta}}$ appears super-linear
in $m_{c}$ for Wigner tomography, as shown in Fig. \ref{fig:merit_Qn_Wig}.
The relative advantage of $Q_{n}$ tomography grows as $m_{c}$ increases. 

\begin{figure}
\centering{}\includegraphics{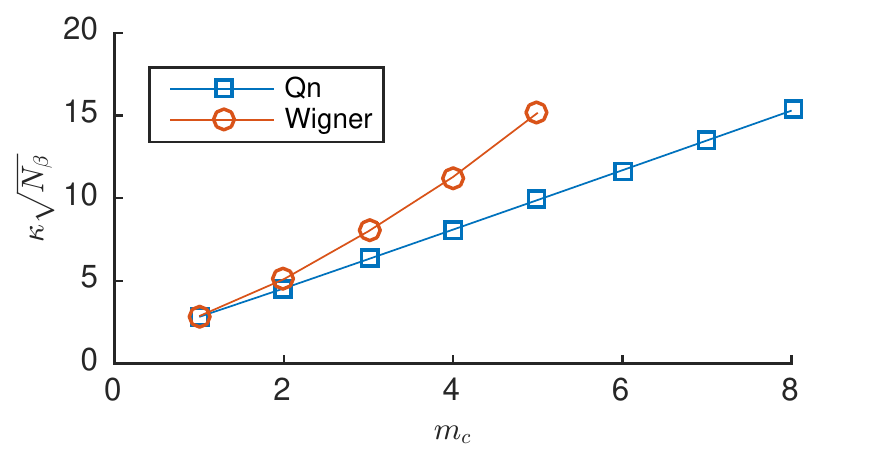}\protect\caption{Comparison of the figures of merits (assuming shot noise only) $\kappa\sqrt{N_{\beta}}$
for optimized $Q_{n}$ tomography with large enough $\left|\beta\right|$
and optimized Wigner tomography obtained from gradient-based optimization.
\label{fig:merit_Qn_Wig}}
\end{figure}

\begin{figure*}
\centering{}\includegraphics{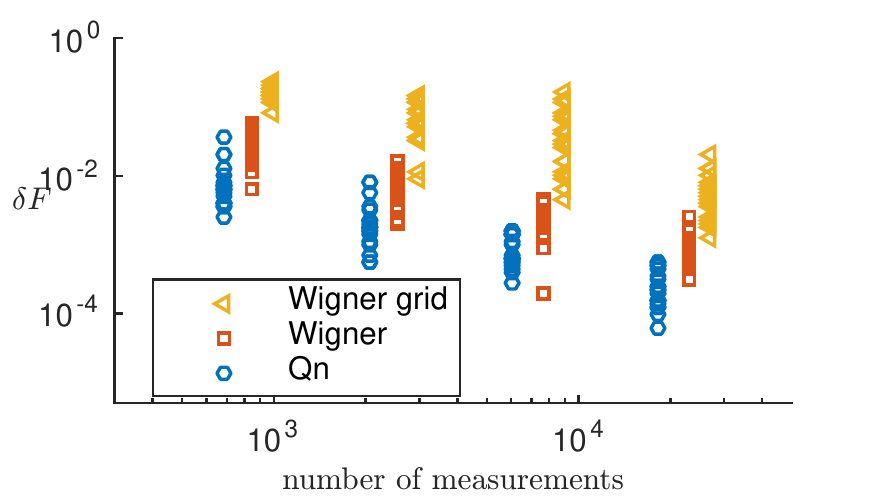}\includegraphics{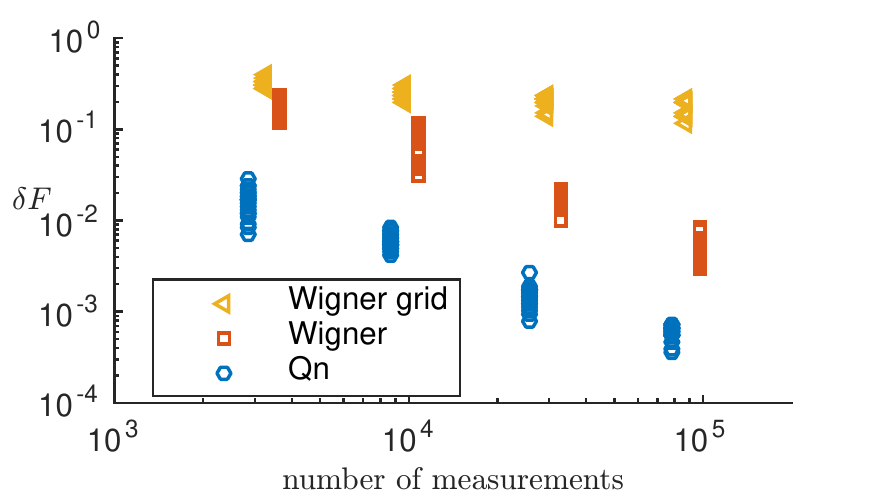}\protect\caption{(Color online) Comparison of performances of Wigner measurements where
$\beta's$ form a square lattice (\textcolor{black}{yellow} triangles),
Wigner measurements with optimized measurement settings obtained from
gradient search (red squares), and $Q_{n}$ measurements with optimized
measurement settings (blue circles). \textbf{Left/Right panels} correspond
to $m_{c}=2$ and $5$. The true state $\rho$ is a randomly generated
density matrix with excitation number cutoff $m_{c}=5$. Each scatter
point corresponds to one reconstruction via semi-definite programming
based on a set of simulated measurement records containing only shot
noise. The $y$ axis shows the reconstruction infidelity $\delta F=1-F(\rho,\,\rho')$
and the $x$ axis shows the total number of measurements performed,
i.e., total number of copies of unknown states consumed. \label{fig:err_vs_Ncp}}
\end{figure*}

\section{Benchmarking with Simulated Data\label{sec:benchmarking}}

Using simulated data (shot noise only), we tested and compared several
schemes, including Wigner measurements where $\beta's$ form a square
lattice (\textcolor{black}{yellow} triangles), Wigner measurements
with optimized $\beta's$ (red squares), and $Q_{n}$ measurements
with optimized $\beta's$ (blue circles). For each case reconstruction
is done by fitting a physical density matrix to the data, a semidefinite
program that can be solved efficiently with the Matlab package CVX
\cite{cvx,gb08}. Some typical results with $m_{c}=2$ and $5$ are
shown in Fig. \ref{fig:err_vs_Ncp}. Both optimized schemes have better
error scaling than the unoptimized one, because the bound for the
unoptimized case is too forgiving to suppress reconstruction error.
Between the two optimized schemes, the reconstruction infidelity for
the $Q_{n}$-based scheme is at least an order of magnitude smaller
than that of the Wigner-based scheme. Moreover, the advantage of using
$Q_{n}$ measurement and more generally optimized schemes indeed becomes
more significant for larger $m_{c}$, as predicted by the figure of
merit shown in Fig. \ref{fig:merit_Qn_Wig} and demonstrated by Fig.
\ref{fig:err_vs_Ncp}.

\section{Generalizations\label{sec:generalizations}}

The idea of optimizing the condition number of the measurement scheme
is completely general and can apply to the reconstruction problem
using \textcolor{black}{arbitrary} bases. Here we show one such example,
the generalized cat states, 
\[
\rho=\sum_{i,j,m_{1,}m_{2}}\rho_{i,m_{1};j,m_{2}}\ket{\alpha_{i},\, m_{1}}\bra{\alpha_{j},\, m_{2}},
\]
 where $i,\, j=1,\,2,\,\cdots,\, p$ and $m_{1},\, m_{2}=0,\,1,\,\cdots,\, m_{c}$,
and 
\[
\ket{\alpha_{i},\, m_{i}}\equiv D(\alpha_{i})\ket{m_{i}}
\]
 are displaced Fock states. Such states may arise when an ideal cat
 state is subject to experimental noise and each coherent-state component
is deformed. Now each column of the sensing matrix has the form 
\[
(d_{i}d_{j}e^{i\phi_{ij}})^{n}P(n),
\]
 where $P(n)$ is a polynomial coming from the associated Laguerre
polynomials 
\[
P(n)=\mathcal{L}_{m_{1}}^{n-m_{1}}(\left|\beta\right|^{2})\mathcal{L}_{m_{2}}^{n-m_{2}}(\left|\beta\right|^{2}).
\]
 On a large scale of $n$, the change of $(d_{i}d_{j}e^{i\phi_{ij}})^{n}P(n)$
as a function of $n$ is dominated by the exponential part $(d_{i}d_{j}e^{i\phi_{ij}})^{n}$.
So just as in the cat state case the columns with distinct $d_{i}d_{j}e^{i\phi_{ij}}$
are linearly independent. For the $(m_{c}+1)^{2}$ columns that share
the same $d_{i}d_{j}e^{i\phi_{ij}}$ but different polynomials $P(n)$,
we need $(m_{c}+1)$ different $\beta$'s to completely fix all unknowns
as discussed previously. We can then run numerical optimization for
all $N\ge(m_{c}+1)$ and pick the optimal $N$. 

A simultaneous optimization of many $\beta's$ can often get stuck
in shallow local minima. Here we show an alternative greedy policy
for optimization that works pretty well, where we pick one best $\beta$
at a time. The procedure is as follows. 

(1) Start with an empty set $S=\emptyset$ of $\beta's$, keeping
all the $\alpha's$ but set $m_{c}=0$, which allows the condition
number to be finite with one $\beta$.

(2) Pick the optimal $\beta$ (in the sense that it combined with
those $\beta's$ in $S$ produces the lowest condition number) and
add it to the set $S$.

(3) If the optimal condition number is small enough, increase $m_{c}$
by one (otherwise keep it the same).

(4) Repeat steps (2) and (3) until one reaches the desired $m_{c}$. 

We give one example here for which the condition number as a function
of the next $\beta$ to pick is shown in Fig. \ref{fig:gen_cat_reconstruction}. 

\begin{figure*}
\centering{}\includegraphics[scale=1.2]{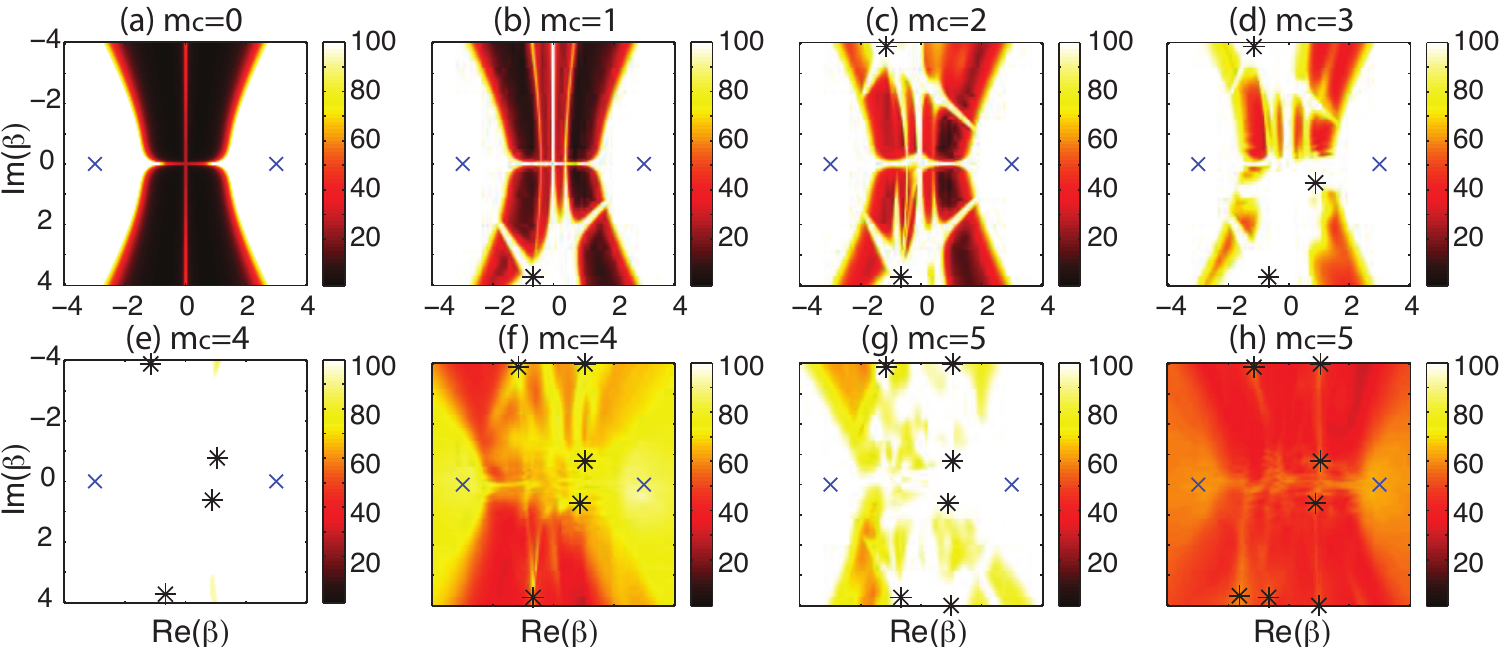}\protect\caption{Greedy optimization of the set of $\beta's$. Crosses show the position
of $\alpha's$ and stars indicate all the $\beta's$ added to the
set $S$. At each step, the optimal $\beta$ is added to the set $S$.
When the condition number is low enough (smaller than a preset threshold),
$m_{c}$ is increased by one and the optimization goes on. \label{fig:gen_cat_reconstruction}}
\end{figure*}

\section{Conclusion\label{sec:Conclusion}}

We proposed and analyzed a continuous variable QST scheme with the
full distribution information of excitation number after a variable
displacement. We showed how to construct a set of measurements that
has a small reconstruction error bound by optimizing a figure of merit
based on the condition number of the sensing map. For general states
with a given excitation number cutoff, we obtained the optimal displacement
patterns (half-ring and full-ring) that rationalize and improve the
previously considered ring-based choices. The idea of gradient-based
optimization of the condition number of the sensing map is versatile
and can apply to states expanded in an arbitrary basis and detection
methods that are parameterized by some continuous variables. As future
work, it is interesting to generalize the current scheme to QST for
multiple oscillators, spin ensembles \cite{Chen_Vuletic_2015}, and
CV process tomography. 
\begin{acknowledgments}
We acknowledge support from the ARL-CDQI, ARO (Grants No. W911NF-14-1-0011
and No. W911NF-14-1-0563), AFOSR MURI (Grants No. FA9550-14-1-0052
and No FA9550-14-1-0015), Alfred P. Sloan Foundation (Grant No. BR2013-049),
and Packard Foundation (Grant No. 2013-39273). We thank Siddharth
Prabhu and Jianming Wen for fruitful discussions.
\end{acknowledgments}

\appendix
\begin{widetext}

\section{Reconstructing A Physical Density Matrix \label{sec:appdx-physical-rho}}

Let $\vec{\rho'}$ be the least-squares solution (potentially non-physical)
from the noisy measurement record, 
\[
\vec{\rho'}=\left(A^{\dagger}A\right)^{-1}A^{\dagger}\left(\vec{b}+\delta\vec{b}\right).
\]
We claim that the physical density matrix $\tau$ that is closest
to $\rho'$ in the sense of some norm (say, the Frobenius norm) can
only be a better estimate of the true state $\rho$, i.e.,
\begin{equation}
\norm{\tau-\rho}_{F}\leq\norm{\rho'-\rho}_{F}.\label{eq:tau-rho-rho-prime}
\end{equation}

We now prove the above equation by contradiction. Suppose $\norm{\tau-\rho}_{F}>\norm{\rho'-\rho}_{F}$.
Now consider the triangle whose vertices are $\rho$, $\rho'$ and
$\tau$. Let $\theta\in[0,\pi]$ be the angle at the vertex $\tau$.
Using the Law of Cosines, we have that 
\[
\cos\theta=\frac{\norm{\rho'-\tau}_{F}^{2}-\norm{\rho-\rho'}_{F}^{2}+\norm{\rho-\tau}_{F}^{2}}{2\norm{\rho'-\tau}_{F}\norm{\rho-\tau}_{F}}>0.
\]
 This implies that $0\leq\theta<\pi/2$, i.e., the angle at $\tau$
is less than 90 deg. 

Hence, there exists a point $\zeta$ that is a convex combination
of $\tau$ and $\rho$ such that 
\[
\norm{\zeta-\rho'}_{F}<\norm{\tau-\rho'}_{F}.
\]
 Moreover, since $\rho$ and $\tau$ are physical density matrices
and the space of density matrices is convex, it follows that $\zeta$
is also physical. This contradicts the assumption that ``$\tau$
is the physical density matrix $\tau$ that is closest to $\rho'$.''
Therefore, we conclude that Eq. (\ref{eq:tau-rho-rho-prime}) must
hold.

Practically, $\tau$ can be obtained as the solution of the following
semidefinite program (SDP), 

\begin{eqnarray*}
\text{minimize} &  & \norm{\sigma-\rho'}_{F}\\
\text{subject to} & \  & \sigma\succeq0,\ \text{tr}\sigma=1.
\end{eqnarray*}
Note that SDP can be solved efficiently using the Matlab package CVX
\cite{cvx,gb08}. 

Alternatively, a physical reconstruction $\tau'$ may be obtained
by directly solving the least-squares problem in the space of physical
density matrices, i.e., 

\begin{eqnarray*}
\text{minimize} &  & \norm{A\cdot\vec{\sigma}-\vec{b}'}_{2}\\
\text{subject to} & \  & \sigma\succeq0,\ \text{tr}\sigma=1.
\end{eqnarray*}

\section{Bound for Reconstruction Error \label{sec:appdx-error-bound}}

We derive the  lower bound on the fidelity of reconstruction in terms
of condition number here. We will first find an upper bound for the
trace distance of the reconstructed state to the true state, and then
get the fidelity bound using the relation between fidelity and trace
distance $D(\rho,\,\sigma)$, 
\[
F(\rho,\,\sigma)\ge1-D(\rho,\,\sigma)
\]
 where $D(\rho,\,\sigma)=\frac{1}{2}\left\Vert \rho-\sigma\right\Vert _{\text{tr}}$. 

Let $\vec{\rho}$ be the true state and $\vec{\rho'}$ be the least-squares
solution from the noisy measurement record, 
\begin{eqnarray*}
\vec{\rho} & = & \left(A^{\dagger}A\right)^{-1}A^{\dagger}\vec{b},\\
\vec{\rho'} & = & \left(A^{\dagger}A\right)^{-1}A^{\dagger}\left(\vec{b}+\delta\vec{b}\right),
\end{eqnarray*}
and define $\delta\vec{\rho}\equiv\vec{\rho}-\vec{\rho'}=\tilde{A}^{-1}\delta\vec{b}=\left(A^{\dagger}A\right)^{-1}A^{\dagger}\delta\vec{b}$. 

Following the main text we use the two-norm for vectors $\vec{\rho}$
to define the condition number, then

\[
\left(\frac{\left\Vert \delta\vec{\rho}\right\Vert _{2}}{\left\Vert \vec{\rho}\right\Vert _{2}}\right)/\left(\frac{\left\Vert \delta\vec{b}\right\Vert _{2}}{\left\Vert \vec{b}\right\Vert _{2}}\right)\le\kappa(A).
\]
 Since the Frobenius norm of a matrix is the same as the two-norm
of it when arranged as a vector,
\[
\left\Vert \rho\right\Vert _{F}=\left\Vert \vec{\rho}\right\Vert _{2}\le\kappa(A)\left\Vert \rho\right\Vert _{F}\frac{\left\Vert \delta\vec{b}\right\Vert _{2}}{\left\Vert \vec{b}\right\Vert _{2}}.
\]

Let $\tau$ be the physical density matrix that best satisfies the
noisy measurement record $A\tau=\vec{b}+\delta\vec{b}$, obtained
as described in the previous section. We have 
\[
\left\Vert \rho-\tau\right\Vert _{\text{F}}\le\left\Vert \rho-\rho'\right\Vert _{\text{F}}=\left\Vert \delta\rho\right\Vert _{F}\le\kappa(A)\left\Vert \rho\right\Vert _{F}\frac{\left\Vert \delta\vec{b}\right\Vert _{2}}{\left\Vert \vec{b}\right\Vert _{2}},
\]
where the first inequality uses Eq. (\ref{eq:tau-rho-rho-prime}).
The above bound is useful since it upper bounds the distance (in terms
of the Frobenius norm) between the reconstructed state and the true
state. 

Using the relation between the trace norm and Frobenius norm 
\[
\left\Vert M\right\Vert _{\text{tr}}\le\sqrt{r}\left\Vert M\right\Vert _{F},
\]
we find 
\[
D(\rho,\,\tau)\le\frac{1}{2}\sqrt{r}\left\Vert \rho-\tau\right\Vert _{\text{F}}\le\frac{1}{2}\sqrt{r}\kappa(A)\left\Vert \rho\right\Vert _{F}\frac{\left\Vert \delta\vec{b}\right\Vert _{2}}{\left\Vert \vec{b}\right\Vert _{2}}
\]
 and 
\begin{eqnarray}
F(\rho,\,\tau) & \ge & 1-D(\rho,\,\tau)\ge1-\frac{1}{2}\sqrt{r}\kappa(A)\left\Vert \rho\right\Vert _{F}\frac{\left\Vert \delta\vec{b}\right\Vert _{2}}{\left\Vert \vec{b}\right\Vert _{2}}.\label{eqn-f}
\end{eqnarray}
In practice we have an estimate for the measurement noise $\epsilon\sim\frac{\left\Vert \delta\vec{b}\right\Vert _{2}}{\left\Vert \vec{b}\right\Vert _{2}}$
and the truncation dimension $d$ upperbounds the rank $r$ of $\delta\rho$.
Since $\rho$ is unknown we replace it with the reconstructed $\tau$.
In this way an approximate bound on the fidelity can be calculated,
$F(\rho,\,\tau)\gtrsim1-\frac{1}{2}\epsilon\sqrt{d}\kappa(A)\left\Vert \tau\right\Vert _{F}.$

\section{Discussion of Full- And Half-Ring Configurations \label{sec:appdx-rings}}

\subsubsection*{The Pinching Inequality}

Mathematically, wiping out all the off-diagonal blocks is called ``pinching''
and is formally described as 
\[
C\mapsto\tilde{C}=\sum_{k}P_{k}CP_{k},
\]
where $P_{k}$ is the projector to the subspace corresponding to the
block $C_{kk}$. It is known that the eigenvalues of $\tilde{C}$
are majorized by those of $C$ (see p. 50 of \cite{matrix_analysis}),
i.e., $\sum_{i=1}^{k}\lambda_{i}^{\downarrow}(\tilde{C})\le\sum_{i=1}^{k}\lambda_{i}^{\downarrow}(C)$
for $k=1,\,2,\,\cdots,\, D$ and $\sum_{i=1}^{D}\lambda_{i}^{\downarrow}(\tilde{C})=\sum_{i=1}^{D}\lambda_{i}^{\downarrow}(C)$,
where $\lambda_{i}^{\downarrow}$ are the eigenvalues in descending
order and $D$ is the dimension of $C$ and $\tilde{C}$. This implies
that $\kappa(\tilde{C})\le\kappa(C)$. This fact can also be understood
in the language of quantum mechanics. View $\tilde{C}$ as a block-diagonal
Hamiltonian $H_{0}$ and $C-\tilde{C}$ as a perturbation$H_{1}$
coupling different subspaces of $H_{0}$. It is well known that energy
levels repel each other when coupled to each other. So the highest
 energy level gets higher and the lowest gets lower, with their ratio
being increased. 

This means that among the sets of $\beta's$ with the same magnitude,
the FRC can give the optimal CN.

\subsubsection*{Multiple full-ring configuration gives lowest condition number}

We now argue that the MFRC can give the minimal condition number if
we do not limit the number of measurement settings. %
{} Here is our two-step argument. 

(a) Given a candidate configuration $\{\beta_{i}\}$ distributed on
a ring, i.e., $\left|\beta_{i}\right|=r$, we can always decrease
CN by rearranging or adding $\beta's$ such that the configuration
becomes FRC, i.e., pinching the covariance matrix. 

(b) For any given candidate set $\{\beta_{i}\}$ distributed on different
rings, we can always decrease the condition number by rearranging
or adding $\beta's$ such that the configuration becomes a collection
of FRC (MFRC) to pinch the covariance matrix. 

Numerically we observed that usually one full-ring configuration is
as good as the multiple full-ring configuration, except the case with
$m_{c}=1$ where a 1.6\% difference between single-ring and double-ring
configurations is found.

\subsubsection*{Half-ring configuration approximates full-ring configuration well }

We find it possible to simplify FRC further. With less than $(2m_{c}+1)$
points, it is impossible to exactly satisfy $\sum_{j}e^{i(k_{1}-k_{2})\phi_{j}}=\delta_{k_{1}k_{2}}$
for all $k_{1}$, $k_{2}$. However, we find a very special asymptotic
behavior of the covariance matrix, as stated by the following theorem
(see Appendix \ref{sec:appdx-proof-thm} for the proof). 
\begin{thm}
\label{thm:asymptotic-Cov}The large-$\left|\beta\right|$ asymptotic
form of $C_{m_{1}m_{2},m_{3}m_{4}}(\beta)$ is

\[
C_{m_{1}m_{2},m_{3}m_{4}}(\beta)\sim\begin{cases}
g(m_{1},\, m_{2},\, m_{3},\, m_{4},\,\phi)/\left|\beta\right|, & \sum_{i=1}^{4}m_{i}\ \text{is even};\\
g(m_{1},\, m_{2},\, m_{3},\, m_{4},\,\phi)/\left|\beta\right|^{2}, & \sum_{i=1}^{4}m_{i}\ \text{is odd};
\end{cases}
\]
where $\phi$ is the complex angle of $\beta$. 
\end{thm}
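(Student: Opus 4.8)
\emph{Proof idea.} Write $M=m_1+m_2+m_3+m_4$. The plan is to factor out the complex phase of $\beta$ and reduce the claim to the large-$|\beta|$ behaviour of a single real sum over the excitation number, which is then evaluated by a Laplace/local-central-limit argument centred at the Poisson peak $n=|\beta|^2$. From the explicit $A_{(n,\beta),(m_1,m_2)}$ quoted above, all dependence on $\phi\equiv\arg\beta$ sits in the factor $(-\beta)^{-m_1}[(-\beta)^{-m_2}]^{*}$, so that $C_{m_1m_2,m_3m_4}(\beta)=e^{i\phi[(m_1-m_2)-(m_3-m_4)]}\,|\beta|^{-M}\sum_{n\ge 0}w_n\,\Pi_n$, where $w_n=(e^{-|\beta|^2}|\beta|^{2n}/n!)^2$ is the squared Poisson weight and $\Pi_n=(-1)^{M}\sqrt{m_1!m_2!m_3!m_4!}\,\prod_i\mathcal{L}_{m_i}^{\,n-m_i}(|\beta|^2)$ is real. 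Hence $g$ is precisely $e^{i\phi[(m_1-m_2)-(m_3-m_4)]}$ times $\lim_{|\beta|\to\infty}|\beta|\cdot|\beta|^{-M}\sum_n w_n\Pi_n$ (resp.\ $|\beta|^{2}\cdot(\cdots)$), and it only remains to pin down the asymptotics of $|\beta|^{-M}\sum_n w_n\Pi_n$.

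Put $x=|\beta|^2$. By Stirling, $w_n$ is concentrated around $n=x$ with $w_n\sim(2\pi x)^{-1}e^{-(n-x)^2/x}$ there, the $n=O(1)$ terms being exponentially small; so, substituting $n=x+\sqrt{x}\,v$, for any fixed polynomial $h$ one has $\sum_n w_n\,h\!\big((n-x)/\sqrt{x}\big)=\frac{1}{2\pi\sqrt{x}}\int_{\mathbb R}e^{-v^2}h(v)\,dv+O(x^{-1})$, the correction coming from the first Edgeworth term of $w_n$, which is odd in $v$ and of relative size $O(x^{-1/2})$. To treat the Laguerre factors near the peak I would use the classical Laguerre-to-Hermite limit $(2/\alpha)^{m/2}\mathcal{L}_m^{(\alpha)}\!\big(\alpha+\sqrt{2\alpha}\,t\big)\to\frac{(-1)^m}{m!}H_m(t)$ (which follows by expanding $\mathcal{L}_m^{(\alpha)}(y)=\sum_{k=0}^m(-1)^k\binom{m+\alpha}{m-k}y^k/k!$ in powers of $y-\alpha$, the apparent leading $O(\alpha^{m})$ piece cancelling), applied with $\alpha=n-m_i$ and $t=(x-\alpha)/\sqrt{2\alpha}\to -v/\sqrt{2}$. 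This gives, uniformly on $|n-x|\lesssim\sqrt{x}\log x$, $\mathcal{L}_{m_i}^{\,n-m_i}(x)\sim (x/2)^{m_i/2}\tfrac{(-1)^{m_i}}{m_i!}H_{m_i}(-v/\sqrt2)$, hence $\Pi_n\sim (x/2)^{M/2}(m_1!m_2!m_3!m_4!)^{-1/2}\,\Psi(v)$ with $\Psi(v):=\prod_i H_{m_i}(-v/\sqrt2)$.

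Now everything rests on the parity of $\Psi$. Because $H_m(-y)=(-1)^mH_m(y)$, $\Psi$ is a polynomial of parity $(-1)^{M}$. Plugging the two preceding steps together, $|\beta|^{-M}\sum_n w_n\Pi_n\sim (m_1!m_2!m_3!m_4!)^{-1/2}2^{-M/2}\,\frac{1}{2\pi\sqrt{x}}\int_{\mathbb R}e^{-v^2}\Psi(v)\,dv$. If $M$ is even, $\Psi$ is even and the Gaussian moment is nonzero — one checks via the Hermite product formula that $\int e^{-v^2}\prod_iH_{m_i}(-v/\sqrt2)\,dv\neq0$ — so this is $\sim\mathrm{const}/\sqrt{x}$ and $C_{m_1m_2,m_3m_4}\sim g/|\beta|$. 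If $M$ is odd, $\Psi$ is odd, $\int e^{-v^2}\Psi=0$, and the leading term cancels; one then retains the first subleading pieces, namely the $O(x^{-1/2})$ Edgeworth correction to $w_n$ (odd in $v$) against the odd $\Psi$, and the $O(x^{-1/2})$ relative correction to $\Pi_n$ (a polynomial of the opposite, even, parity) against the leading weight — both being even integrands with nonzero Gaussian moments, and both of order $x^{-1}$ — so $C_{m_1m_2,m_3m_4}\sim g/|\beta|^2$, with $g$ carrying the same phase $e^{i\phi[(m_1-m_2)-(m_3-m_4)]}$.

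The main obstacle is the rigour of the last two steps, especially in the odd case: one needs honest Poisson-tail bounds and uniformity of the Laguerre-to-Hermite expansion across the $\sqrt{x}$-window, and one must track \emph{all} $O(x^{-1/2})$ corrections (both in $w_n$ and in $\prod_i\mathcal{L}_{m_i}^{\,n-m_i}$) and confirm that the $O(1/|\beta|^2)$ coefficient is genuinely the first one that survives, rather than a further accidental cancellation. A convenient sanity check that keeps the bookkeeping honest is the exact identity $C_{10,00}(\beta)=-(\beta^{*})^{-1}\sum_n p_n^2(n-x)=-(\beta^{*})^{-1}\,x\,e^{-2x}\big[I_1(2x)-I_0(2x)\big]$ with $p_n=e^{-x}x^n/n!$, whose Bessel asymptotics give $C_{10,00}\sim e^{i\phi}/(8\sqrt{\pi}\,|\beta|^2)$, in agreement with the odd case. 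Conceptually, though, the even/odd split is just the parity of $\prod_iH_{m_i}$, so the heart of the theorem is short and the work is all in uniform error control.
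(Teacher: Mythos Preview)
Your approach is correct in spirit and genuinely different from the paper's. The paper proceeds by \emph{exact} special-function identities: it expands each $\mathcal L_{m_i}^{\,n-m_i}(|\beta|^2)$ as a polynomial in $n$, uses the identity $\sum_n z^n (n!)^{-2}\prod_i\binom{n}{j_i}$ $=$ a finite linear combination of $(\sqrt z)^{p}I_q(2\sqrt z)$ (Lemma~\ref{lem:bino-prod-sum}), inserts the standard large-argument series for $I_\nu$, and then invokes the alternating-sum identity of Lemma~\ref{lem:moments-binom} to see which monomials $j_1^{p_1}\cdots j_4^{p_4}$ survive; the parity of $M$ enters through the minimal order $l_*$ of the Bessel expansion that can supply enough total degree. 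Your route instead localises the squared Poisson weight near $n=|\beta|^2$ and replaces the Laguerre factors by Hermite polynomials via the classical limit, so the parity mechanism becomes simply the parity of $\prod_i H_{m_i}$ under a Gaussian integral. What your argument buys is conceptual transparency --- and, pleasantly, it explains at a stroke the paper's separate observation (Appendix~\ref{sec:appdx-homodyne}) that the $Q_n$ covariance asymptotes to the homodyne covariance, since your leading term is literally $\int e^{-2y^2}\prod_i H_{m_i}(y)\,dy$. What the paper's argument buys is a closed-form expression for $g$ and a proof that rests only on standard Bessel asymptotics, with no need for uniform control of the Laguerre-to-Hermite expansion or Edgeworth bookkeeping; in particular, for odd $M$ the paper reads the $1/|\beta|^2$ coefficient directly off the next Bessel term rather than assembling several subleading contributions. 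Your honest list of obstacles is exactly right: the odd case requires tracking all $O(x^{-1/2})$ corrections simultaneously (to both $w_n$ and the Laguerre factors) and verifying no further cancellation, which is tidier via the Bessel route. Neither proof, incidentally, actually checks that $g\neq 0$ in every case.
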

This theorem effectively says that the elements of $C(\beta)$ have
a \textbf{``parity selection rule.''} 

So in the large $\left|\beta\right|$ limit, the block $C_{k_{1}k_{2}}\sim1/\left|\beta\right|$
if $k_{1}-k_{2}$ is even and $C_{k_{1}k_{2}}\sim1/\left|\beta\right|^{2}$
if $k_{1}-k_{2}$ is odd. \textcolor{black}{Certainly, all diagonal
blocks $C_{kk}\sim1/\left|\beta\right|$.} So if $\left|\beta\right|$
is large enough, the blocks with odd $(k_{1}-k_{2})$ automatically
vanish. To make the rest of the off-diagonal blocks vanish, we only
need to choose a configuration such that $\sum_{j}e^{i(k_{1}-k_{2})\phi_{j}}=\delta_{k_{1}k_{2}}$
holds for even $k_{1}-k_{2}=2l$, where $l=0,\,\pm1,\,\pm2,\,\cdots,\,\pm m_{c}$,
i.e.,
\[
\sum_{j}e^{2il\phi_{j}}=\delta_{l,0.}
\]
It is straightforward to check that the HRC, $\phi_{j}=\frac{\pi}{m_{c}+1}j$
qualifies for all $m_{c}$ and $\phi_{j}=\frac{2\pi}{m_{c}+1}j$ qualifies
for even $m_{c}$. In fact for even $m_{c}$, $\phi_{j}=\frac{2\pi n}{m_{c}+1}j$
could work for any non-zero integer $n$. Therefore if the optimal
radius of FRC is large (which as we will show is usually the case),
HRC should work equally well with only half of the measurements.

\section{Optimal Setting For Homodyne Measurement \label{sec:appdx-homodyne}}

The pinching analysis to Homodyne tomography follows the $Q_{n}$
case closely. The term $\ket{m_{1}}\bra{m_{2}}$ contributes the Homodyne
signal 

\begin{eqnarray*}
\mathcal{H}(\ket{m_{1}}\bra{m_{2}}) & = & \trace{\left[\ket{x_{\theta}}\bra{x_{\theta}}\cdot\ket{m_{1}}\bra{m_{2}}\right]}\\
 & = & \frac{e^{i(m_{1}-m_{2})\theta}}{\pi^{1/2}\sqrt{2^{m_{1}+m_{2}}m_{1}!m_{2}!}}e^{-x^{2}}H_{m_{1}}(x)H_{m_{2}}(x).
\end{eqnarray*}
And the covariance matrix is 
\begin{eqnarray*}
C_{m_{1}m_{2},m_{3}m_{4}} & = & \frac{e^{i(m_{3}-m_{4}-m_{1}+m_{2})\theta}}{\pi\sqrt{2^{m_{1}+m_{2}+m_{3}+m_{4}}m_{1}!m_{2}!m_{3}!m_{4}!}}\int_{-\infty}^{+\infty}\, e^{-2x^{2}}H_{m_{1}}(x)H_{m_{2}}(x)H_{m_{3}}(x)H_{m_{4}}(x)\\
 & \equiv & \frac{e^{i(m_{3}-m_{4}-m_{1}+m_{2})\theta}}{\pi\sqrt{2^{m_{1}+m_{2}+m_{3}+m_{4}}m_{1}!m_{2}!m_{3}!m_{4}!}}g(m_{1},m_{2},m_{3},m_{4}).
\end{eqnarray*}

Due to the properties of the Hermite polynomials, i.e., $H_{n}(x)$
is an odd or even function of $x$ if $n$ is odd or even. If $m_{1}+m_{2}+m_{3}+m_{4}$
is odd, the integral 
\[
\int_{-\infty}^{+\infty}dx\, e^{-2x^{2}}H_{m_{1}}(x)H_{m_{2}}(x)H_{m_{3}}(x)H_{m_{4}}(x)=0.
\]
To pinch the covariance matrix, we can use the half-ring configuration,
i.e., pick $(m_{c}+1)$ $\theta_{j}$ such that $\theta_{j}=\frac{\pi}{2m_{c}+1}j$
where $j=0,\,1,\,2,\,\cdots,\, m_{c}$.  

Plugging definite values for $m_{1}$, $m_{2}$, $m_{3}$, $m_{4}$,
we find the covariance matrix for Homodyne to be the same (up to a
global  constant) as the asymptotic covariance matrix for $Q_{n}$
measurements.

\section{Numerical Calculation of the Gradient of the Condition Number \label{sec:appdx-gradient}}

We briefly outline how to calculate the gradient of a matrix's condition
number using perturbation theory, in the context of the state tomography
problem. 

Let us perturbe matrix $A$ by changing $\beta_{i}$ infinitesimally,
\begin{eqnarray*}
A(\beta_{i}+\delta\beta_{i}) & = & A+\delta\beta_{i}(\partial_{\beta_{i}}A)\\
 & \equiv & A+\delta\beta_{i}B_{i},
\end{eqnarray*}
 where matrix $B_{i}$ can be calculated from the explicit expression
of $A$. Note that we are changing only one $\beta_{i}$ so there
is no summation over $i$ here. We try to find $\partial_{\beta_{i}}\kappa(A)$.
For convenience we choose to work with the Hermitian covariance matrix
$C\equiv A^{\dagger}A$ whose condition number is $\kappa(C)=\kappa(A^{\dagger}A)=\kappa(A)^{2}$.
\begin{eqnarray}
\partial_{\beta_{i}}\kappa(C) & = & \partial_{\beta_{i}}\frac{\epsilon_{\text{max}}(C)}{\epsilon_{\text{min}}(C)}\nonumber \\
 & = & \frac{\partial_{\beta_{i}}\epsilon_{\text{max}}(C)\epsilon_{\text{min}}(C)-\epsilon_{\text{max}}(C)\partial_{\beta_{i}}\epsilon_{\text{min}}(C)}{\epsilon_{\text{min}}(C)^{2}},\label{eq:grad_cond_number}
\end{eqnarray}
where $\epsilon_{\text{max}}/\epsilon_{\text{min}}$ are the largest/smallest
eigenvalues of $C$. Now the problem reduces to calculate the gradient
of the eigenvalues of $C$ with respect to $\beta_{i}$. 

It is well known in quantum mechanics that the first-order perturbation
to the energy of the $k$th eigenstate is
\[
\delta\epsilon_{k}=\bra{\psi_{k}}\delta H\ket{\psi_{k}}
\]
 where $\ket{\psi_{k}}$ is the $k$th eigenstate of the unperturbed
Hamiltonian $H$ and $\delta H$ is a small perturbation. 

In our case, 
\[
C(\beta_{i}+\delta\beta_{i})=C+\delta\beta_{i}(B_{i}^{\dagger}A+A^{\dagger}B_{i})+O(\delta\beta^{2}),
\]
 so 
\begin{equation}
\partial_{\beta_{i}}\epsilon_{\text{k}}(C)=v_{k}^{\dagger}(B_{i}^{\dagger}A+A^{\dagger}B_{i})v_{k},\label{eq:grad_eigenvalue}
\end{equation}
 where $v_{k}$ is the $k$th eigenvector of $C$.

\section{Proof of Theorem \ref{thm:asymptotic-Cov} \label{sec:appdx-proof-thm}}

For completeness, we provide the detailed proof of theorem \ref{thm:asymptotic-Cov}
in this appendix.

\subsection*{Some Preparation }
\begin{lem}
\label{lem:BesseI-derivative-iterations}Let $I_{\nu}(z)$ denote
the modified Bessel functions of the first kind. For any non-negative
integer $k$, we have 
\begin{eqnarray*}
\frac{\partial^{k}}{\partial z^{k}}\left[(2\sqrt{z})^{\nu}I_{\nu}(2\sqrt{z})\right] & =2^{k} & \left[(2\sqrt{z})^{\nu-k}I_{\nu-k}(2\sqrt{z})\right],\\
\frac{\partial^{k}}{\partial z^{k}}\left[(2\sqrt{z})^{-\nu}I_{\nu}(2\sqrt{z})\right] & =2^{k} & \left[(2\sqrt{z})^{-(\nu+k)}I_{\nu+k}(2\sqrt{z})\right],
\end{eqnarray*}
\end{lem}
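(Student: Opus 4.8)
The plan is to establish the two Bessel-function identities in Lemma~\ref{lem:BesseI-derivative-iterations} by induction on $k$, reducing both to the base case $k=1$. For $k=0$ both statements are trivially the identity $f(z)=f(z)$, so the inductive step is the entire content: if the formula holds for some $k$, I differentiate once more and invoke the $k=1$ case applied to the shifted order $\nu-k$ (respectively $\nu+k$). Thus the whole lemma collapses to proving
\begin{eqnarray*}
\frac{d}{dz}\left[(2\sqrt{z})^{\nu}I_{\nu}(2\sqrt{z})\right] & = & 2\,(2\sqrt{z})^{\nu-1}I_{\nu-1}(2\sqrt{z}),\\
\frac{d}{dz}\left[(2\sqrt{z})^{-\nu}I_{\nu}(2\sqrt{z})\right] & = & 2\,(2\sqrt{z})^{-(\nu+1)}I_{\nu+1}(2\sqrt{z}).
\end{eqnarray*}

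To prove these base cases I would change variables to $w=2\sqrt{z}$, so that $\frac{d}{dz}=\frac{2}{w}\frac{d}{dw}$, and use the standard recurrence relations for modified Bessel functions, namely $\frac{d}{dw}\left[w^{\nu}I_{\nu}(w)\right]=w^{\nu}I_{\nu-1}(w)$ and $\frac{d}{dw}\left[w^{-\nu}I_{\nu}(w)\right]=w^{-\nu}I_{\nu+1}(w)$ (these follow from $I_{\nu-1}(w)-I_{\nu+1}(w)=\frac{2\nu}{w}I_{\nu}(w)$ and $I_{\nu-1}(w)+I_{\nu+1}(w)=2I_{\nu}'(w)$). Applying the chain rule, $\frac{d}{dz}\left[w^{\nu}I_{\nu}(w)\right]=\frac{2}{w}\cdot w^{\nu}I_{\nu-1}(w)=2w^{\nu-1}I_{\nu-1}(w)$, which is exactly the first claimed identity; the second is entirely analogous. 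The inductive step then reads: assuming $\frac{\partial^{k}}{\partial z^{k}}\left[(2\sqrt{z})^{\nu}I_{\nu}(2\sqrt{z})\right]=2^{k}(2\sqrt{z})^{\nu-k}I_{\nu-k}(2\sqrt{z})$, differentiate both sides once and apply the base case with $\nu$ replaced by $\nu-k$ to get $2^{k}\cdot 2\,(2\sqrt{z})^{\nu-k-1}I_{\nu-k-1}(2\sqrt{z})=2^{k+1}(2\sqrt{z})^{\nu-(k+1)}I_{\nu-(k+1)}(2\sqrt{z})$, closing the induction. The second identity is handled the same way.

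I do not anticipate a serious obstacle here; the lemma is a routine computation dressed up for later use, and the only mild care needed is bookkeeping the factor-of-$2$ from $dw/dz = 1/\sqrt{z}$ so that the powers of $2$ accumulate correctly as $2^{k}$. If one prefers to avoid invoking Bessel recurrences as a black box, an alternative is to substitute the series $I_{\nu}(2\sqrt{z})=\sum_{m\geq 0}\frac{z^{m+\nu/2}}{m!\,\Gamma(m+\nu+1)}$, whence $(2\sqrt{z})^{\nu}I_{\nu}(2\sqrt{z})=2^{\nu}\sum_{m\geq 0}\frac{z^{m+\nu}}{m!\,\Gamma(m+\nu+1)}$, differentiate the power series term by term $k$ times, and re-identify the result as $2^{k}(2\sqrt{z})^{\nu-k}I_{\nu-k}(2\sqrt{z})$ by shifting the summation index; the shift $m\mapsto m+k$ reproduces the $\Gamma$-function normalization for order $\nu-k$, and tracking the constant gives $2^{\nu}\cdot 2^{k}/2^{\nu-k}=2^{2k}$ divided appropriately — in practice the series approach makes the power of $2$ transparent and sidesteps any subtlety about branch points of $\sqrt{z}$ by working with integer or half-integer exponents directly.
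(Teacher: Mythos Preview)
Your proposal is correct and is precisely what the paper's one-line proof (``These can be verified using the properties of $I_{\nu}(z)$'') is gesturing at: you have simply spelled out which properties---the standard recurrences $\frac{d}{dw}[w^{\pm\nu}I_{\nu}(w)]=w^{\pm\nu}I_{\nu\mp1}(w)$---and how the chain rule with $w=2\sqrt{z}$ produces the factor $2^{k}$ via induction. There is no substantive difference in approach.
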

\begin{proof}
These can be verified using the properties of $I_{\nu}(z)$. 
\end{proof}

\begin{lem}
\label{lem:bino-prod-sum}Let $n$, $j_{1}$, $j_{2}$, $j_{3}$,
$j_{4}$ be non-negative integers, %
then we have 
\begin{eqnarray*}
 &  & \sum_{n=0}^{\infty}\frac{z^{n}}{(n!)^{2}}\left(\begin{array}{c}
n\\
j_{1}
\end{array}\right)\left(\begin{array}{c}
n\\
j_{2}
\end{array}\right)\left(\begin{array}{c}
n\\
j_{3}
\end{array}\right)\left(\begin{array}{c}
n\\
j_{4}
\end{array}\right)\\
 & = & \frac{1}{j_{1}!j_{2}!j_{3}!j_{4}!}z^{j_{4}}\frac{\partial^{j_{4}}}{\partial z^{j_{4}}}z^{j_{3}}\frac{\partial^{j_{3}}}{\partial z^{j_{3}}}z^{j_{2}}\frac{\partial^{j_{2}}}{\partial z^{j_{2}}}z^{j_{1}}\frac{\partial^{j_{1}}}{\partial z^{j_{1}}}I_{0}(2\sqrt{z})\\
 & = & \sum_{k_{4}=0}^{j_{4}}\sum_{k_{3}=0}^{j_{3}}\frac{j_{4}!}{k_{4}!(j_{4}-k_{4})!}\frac{j_{3}!}{k_{3}!(j_{3}-k_{3})!}\frac{j_{2}!}{(j_{2}-k_{3})!}\frac{(j_{2}+j_{3}-k_{3})!}{(j_{2}+j_{3}-k_{3}-k_{4})!}(\sqrt{z})^{j_{1}+j_{2}+j_{3}+j_{4}-k_{3}-k_{4}}I_{j_{1}-j_{2}-j_{3}-j_{4}+k_{3}+k_{4}}(2\sqrt{z}).
\end{eqnarray*}
where $I_{0}(2\sqrt{z})=\sum_{n=0}^{\infty}\frac{z^{n}}{(n!)^{2}}$. \end{lem}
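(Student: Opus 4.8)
The plan is to prove the two equalities separately: the first by a diagonalization argument, and the second by an iterated reduction that feeds Lemma~\ref{lem:BesseI-derivative-iterations} into the Leibniz rule. Throughout I use that $I_{0}(2\sqrt{z})=\sum_{n}z^{n}/(n!)^{2}$ is entire, so termwise differentiation of the series is legitimate.

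For the first equality, the key observation is the Euler-operator identity $\frac{1}{j!}z^{j}\frac{\partial^{j}}{\partial z^{j}}z^{n}=\binom{n}{j}z^{n}$, which says that each operator $\frac{1}{j!}z^{j}\partial_{z}^{j}$ is diagonal on monomials with eigenvalue $\binom{n}{j}$. Applying the four nested operators to $I_{0}(2\sqrt{z})$ therefore simply inserts the factor $\binom{n}{j_{1}}\binom{n}{j_{2}}\binom{n}{j_{3}}\binom{n}{j_{4}}$ into each summand, which is precisely the first line. (Since these operators all act diagonally, they commute on monomials and the nesting order is irrelevant at this stage.)

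For the second equality I would first establish a single one-step reduction rule. Writing $G_{p,\nu}(z):=(\sqrt{z})^{p}I_{\nu}(2\sqrt{z})$ and setting $a:=(p-\nu)/2$, the Leibniz rule applied to the factorization $z^{a}\cdot(2\sqrt{z})^{\nu}I_{\nu}(2\sqrt{z})$ together with the order-lowering identity of Lemma~\ref{lem:BesseI-derivative-iterations} yields
\[
z^{j}\frac{\partial^{j}}{\partial z^{j}}G_{p,\nu}=\sum_{k=0}^{j}\binom{j}{k}\frac{a!}{(a-k)!}\,G_{p+j-k,\,\nu-j+k},
\]
where the split is between the $k$ derivatives landing on the monomial prefactor $z^{a}$ (giving $\frac{a!}{(a-k)!}z^{a-k}$) and the remaining $j-k$ acting on the matched Bessel factor; crucially, all powers of $2$ cancel, leaving $(\sqrt{z})$ rather than $(2\sqrt{z})$ in the result. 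Iterating this rule four times from $G_{0,0}=I_{0}(2\sqrt{z})$ finishes the proof. The first operator produces the single term $G_{j_{1},-j_{1}}$, which I re-express as $G_{j_{1},j_{1}}$ using $I_{-m}=I_{m}$; its prefactor exponent is then $a=0$, so the second operator also produces a single term $G_{j_{1}+j_{2},\,j_{1}-j_{2}}$. The third and fourth operators have nonzero prefactor exponents ($a=j_{2}$ and $a=j_{2}+j_{3}-k_{3}$), and hence generate exactly the sums over $k_{3}$ and $k_{4}$, with coefficients $\frac{j_{3}!}{k_{3}!(j_{3}-k_{3})!}\frac{j_{2}!}{(j_{2}-k_{3})!}$ and $\frac{j_{4}!}{k_{4}!(j_{4}-k_{4})!}\frac{(j_{2}+j_{3}-k_{3})!}{(j_{2}+j_{3}-k_{3}-k_{4})!}$ respectively, matching the stated product of factorials, while the indices land on $G_{\,j_{1}+j_{2}+j_{3}+j_{4}-k_{3}-k_{4},\;j_{1}-j_{2}-j_{3}-j_{4}+k_{3}+k_{4}}$.

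The main obstacle is the bookkeeping that keeps the sum from proliferating. One must choose the sign of the Bessel order so that the first two steps collapse to single terms: if instead the order were kept as $-j_{1}$ after step one, step two would spawn a spurious extra summation that equals the stated one only after reinvoking $I_{-m}=I_{m}$, producing a formula with more indices than the one claimed. After the single re-expression following step one, no further sign flips are made, and the signed order is carried unchanged (so it may legitimately go negative). Verifying that the prefactor exponent $a=(p-\nu)/2$ remains a nonnegative integer at every stage (equivalently that power minus order stays nonnegative and even) and that the factors of $2$ cancel at each application of the one-step rule is routine but is where all the care resides; once that rule is in hand, the assembly of the combinatorial coefficients is mechanical.
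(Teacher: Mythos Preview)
Your proposal is correct and follows essentially the same route as the paper: the diagonal action $\frac{1}{j!}z^{j}\partial_{z}^{j}z^{n}=\binom{n}{j}z^{n}$ for the first equality, and for the second an iterated application of Lemma~\ref{lem:BesseI-derivative-iterations} combined with the Leibniz rule, with the single use of $I_{-m}=I_{m}$ after step one so that steps one and two each collapse to a single term while steps three and four generate the sums over $k_{3},k_{4}$. Your packaging of the induction via $G_{p,\nu}$ and a uniform one-step reduction rule is a bit cleaner than the paper's step-by-step computation, but the content is the same.
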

\begin{proof}
It is straightforward to show tha
\begin{eqnarray*}
\sum_{n}\frac{z^{n}}{(n!)^{2}}\left(\begin{array}{c}
n\\
j
\end{array}\right) & = & \frac{1}{j!}\sum_{n}\frac{z^{n}}{(n!)^{2}}n(n-1)\cdots(n-j+1)=\frac{1}{j!}z^{j}\frac{\partial^{j}}{\partial z^{j}}I_{0}(2\sqrt{z}).
\end{eqnarray*}
Similarly,
\[
\sum_{n=0}^{\infty}\frac{z^{n}}{(n!)^{2}}\left(\begin{array}{c}
n\\
j_{1}
\end{array}\right)\left(\begin{array}{c}
n\\
j_{2}
\end{array}\right)\left(\begin{array}{c}
n\\
j_{3}
\end{array}\right)\left(\begin{array}{c}
n\\
j_{4}
\end{array}\right)=\frac{1}{j_{1}!j_{2}!j_{3}!j_{4}!}z^{j_{4}}\frac{\partial^{j_{4}}}{\partial z^{j_{4}}}z^{j_{3}}\frac{\partial^{j_{3}}}{\partial z^{j_{3}}}z^{j_{2}}\frac{\partial^{j_{2}}}{\partial z^{j_{2}}}z^{j_{1}}\frac{\partial^{j_{1}}}{\partial z^{j_{1}}}I_{0}(2\sqrt{z}).
\]
\begin{eqnarray*}
\end{eqnarray*}
We now try to express the above quantity in an explicit form. 

First, using Lemma \ref{lem:BesseI-derivative-iterations},
\begin{eqnarray*}
\frac{\partial^{j_{1}}}{\partial z^{j_{1}}}I_{0}(2\sqrt{z}) & = & 2^{j_{1}}\left[(2\sqrt{z})^{-j_{1}}I_{-j_{1}}(2\sqrt{z})\right].
\end{eqnarray*}

Next, 
\begin{eqnarray*}
\frac{\partial^{j_{2}}}{\partial z^{j_{2}}}z^{j_{1}}\frac{\partial^{j_{1}}}{\partial z^{j_{1}}}I_{0}(2\sqrt{z}) & = & 2^{j_{2}-j_{1}}\left[(2\sqrt{z})^{j_{1}-j_{2}}I_{j_{1}-j_{2}}(2\sqrt{z})\right].
\end{eqnarray*}

Continuing this, we can get
\begin{eqnarray*}
\frac{\partial^{j_{3}}}{\partial z^{j_{3}}}z^{j_{2}}\frac{\partial^{j_{2}}}{\partial z^{j_{2}}}z^{j_{1}}\frac{\partial^{j_{1}}}{\partial z^{j_{1}}}I_{0}(2\sqrt{z}) & = & 2^{j_{2}-j_{1}}\frac{\partial^{j_{3}}}{\partial z^{j_{3}}}z^{j_{2}}\left[(2\sqrt{z})^{j_{1}-j_{2}}I_{j_{1}-j_{2}}(2\sqrt{z})\right]\\
 & = & 2^{j_{2}-j_{1}}\sum_{k_{3}=0}^{j_{3}}\left(\begin{array}{c}
j_{3}\\
k_{3}
\end{array}\right)\frac{\partial^{k_{3}}}{\partial z^{k_{3}}}(z^{j_{2}})\frac{\partial^{j_{3}-k_{3}}}{\partial z^{j_{3}-k_{3}}}\left[(2\sqrt{z})^{j_{1}-j_{2}}I_{j_{1}-j_{2}}(2\sqrt{z})\right]\\
 & = & 2^{j_{3}+j_{2}-j_{1}}\sum_{k3=0}^{j_{3}}\left(\begin{array}{c}
j_{3}\\
k_{3}
\end{array}\right)\frac{\partial^{k_{3}}}{\partial z^{k_{3}}}(z^{j_{2}})2^{-k_{3}}\left[(2\sqrt{z})^{j_{1}-j_{2}-j_{3}+k_{3}}I_{j_{1}-j_{2}-j_{3}+k_{3}}(2\sqrt{z})\right]
\end{eqnarray*}
Eventually we obtain 
\begin{eqnarray*}
 &  & z^{j_{4}}\frac{\partial^{j_{4}}}{\partial z^{j_{4}}}z^{j_{3}}\frac{\partial^{j_{3}}}{\partial z^{j_{3}}}z^{j_{2}}\frac{\partial^{j_{2}}}{\partial z^{j_{2}}}z^{j_{1}}\frac{\partial^{j_{1}}}{\partial z^{j_{1}}}I_{0}(2\sqrt{z})\\
 & = & \sum_{k_{4}=0}^{j_{4}}\sum_{k_{3}=0}^{j_{3}}\frac{j_{4}!}{k_{4}!(j_{4}-k_{4})!}\frac{j_{3}!}{k_{3}!(j_{3}-k_{3})!}\frac{j_{2}!}{(j_{2}-k_{3})!}\frac{(j_{2}+j_{3}-k_{3})!}{(j_{2}+j_{3}-k_{3}-k_{4})!}(\sqrt{z})^{j_{1}+j_{2}+j_{3}+j_{4}-k_{3}-k_{4}}I_{j_{1}-j_{2}-j_{3}-j_{4}+k_{3}+k_{4}}(2\sqrt{z}).
\end{eqnarray*}
Note that in the above derivation, factors like 
\[
\frac{a!}{(a-b)!}=a(a-1)(a-2)\cdots(a-b+1)
\]
are naturally interpreted as zero if $a<b$. 
\end{proof}

\begin{lem}
\label{lem:moments-binom}Let $m$ be a positive integer and $k$
is a non-negative integer, 
\[
\sum_{i=0}^{m}(-1)^{i}\left(\begin{array}{c}
m\\
i
\end{array}\right)i^{k}=\begin{cases}
0, & \text{if }0\le k<m;\\
(-1)^{m}m!, & \text{if }k=m;\\
(-1)^{m}m!\left(\begin{array}{c}
m+1\\
2
\end{array}\right), & \text{if }k=m+1.
\end{cases}
\]
 
\end{lem}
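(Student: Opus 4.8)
The plan is to recognize the alternating binomial sum as the $m$-th finite difference of the monomial $x^{k}$ evaluated at the origin, a quantity that is completely controlled by the degree of the polynomial. Concretely, I would introduce the generating function
\[
\Phi(t)\;\equiv\;(1-e^{t})^{m}\;=\;\sum_{i=0}^{m}\binom{m}{i}(-1)^{i}e^{it},
\]
so that differentiating $k$ times and setting $t=0$ reproduces exactly the sum in question,
\[
\Phi^{(k)}(0)\;=\;\sum_{i=0}^{m}(-1)^{i}\binom{m}{i}\,i^{k}.
\]
It then suffices to compute the Taylor coefficients of $\Phi$ at $t=0$ up to order $m+1$.

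The next step is the expansion $1-e^{t}=-t\bigl(1+\tfrac{t}{2}+\tfrac{t^{2}}{6}+\cdots\bigr)$, which gives
\[
\Phi(t)\;=\;(-1)^{m}t^{m}\Bigl(1+\tfrac{t}{2}+O(t^{2})\Bigr)^{m}\;=\;(-1)^{m}\Bigl(t^{m}+\tfrac{m}{2}\,t^{m+1}+O(t^{m+2})\Bigr).
\]
Reading off coefficients: the coefficient of $t^{k}$ in $\Phi$ vanishes for $0\le k<m$, equals $(-1)^{m}$ for $k=m$, and equals $(-1)^{m}m/2$ for $k=m+1$. Multiplying by $k!$ to recover $\Phi^{(k)}(0)$ yields $0$, then $(-1)^{m}m!$, then $(-1)^{m}(m+1)!\,\tfrac{m}{2}=(-1)^{m}m!\binom{m+1}{2}$, which is precisely the three cases of the lemma.

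The computation is essentially routine; the only point requiring a little care is the $k=m+1$ case, where one must retain the subleading term in the expansion $(1-e^{t})/(-t)=1+\tfrac{t}{2}+\cdots$ and correctly collect the single relevant cross term when raising the bracket to the $m$-th power, noting that all contributions of order $t^{2}$ or higher in the bracket are irrelevant. An equivalent and equally short route is combinatorial: write $\sum_{i=0}^{m}(-1)^{i}\binom{m}{i}i^{k}=(-1)^{m}\,\Delta^{m}x^{k}\big|_{x=0}$ with $\Delta$ the forward-difference operator, and invoke $\Delta^{m}x^{n}\big|_{x=0}=m!\,S(n,m)$ for the Stirling numbers of the second kind together with $S(n,m)=0$ for $n<m$, $S(m,m)=1$, and $S(m+1,m)=\binom{m+1}{2}$. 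I would present the generating-function computation as the main proof and mention the finite-difference reading as a remark.
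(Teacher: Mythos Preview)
Your argument is correct. Both the generating-function computation via $\Phi(t)=(1-e^{t})^{m}$ and the finite-difference/Stirling-number reading are valid and yield the three cases cleanly.

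Your route differs from the paper's. The paper starts from the polynomial identity $(1+\alpha)^{m}=\sum_{i}\binom{m}{i}\alpha^{i}$, applies the operator $(\alpha\,\partial_{\alpha})^{k}$ to produce the factor $i^{k}$, then substitutes $x=1+\alpha$ so that the sum becomes $\left.\bigl(x\partial_{x}-\partial_{x}\bigr)^{k}x^{m}\right|_{x=0}$; the three cases are read off by expanding the operator product into $2^{k}$ words in $x\partial_{x}$ and $-\partial_{x}$ and counting which words reduce the power of $x$ exactly to zero. Your approach instead packages everything into the exponential generating function $(1-e^{t})^{m}$ and reads off Taylor coefficients directly, which is shorter and avoids the word-counting bookkeeping (particularly for $k=m+1$, where the paper must enumerate $m+1$ surviving terms and sum them). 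The paper's method is more elementary in the sense that it never leaves polynomial algebra, while yours leans on the standard Taylor expansion of $e^{t}$ and connects naturally to the Stirling-number identity $\Delta^{m}x^{n}|_{x=0}=m!\,S(n,m)$, which some readers will recognize immediately. Either proof is entirely adequate for the lemma's role in the appendix.
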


\begin{proof}
Let $\alpha$ be any real number,
\begin{eqnarray*}
(1+\alpha)^{m} & = & \sum_{i=0}^{m}\alpha^{i}\left(\begin{array}{c}
m\\
i
\end{array}\right).
\end{eqnarray*}
We then have 

\[
(\alpha\frac{\partial}{\partial\alpha})^{k}(1+\alpha)^{m}=\sum_{i=0}^{m}i^{k}\alpha^{i}\left(\begin{array}{c}
m\\
i
\end{array}\right).
\]

Defining $x\equiv1+\alpha$, we have 
\[
\alpha\frac{\partial}{\partial\alpha}=(\alpha+1-1)\frac{\partial}{\partial(\alpha+1)}=(x-1)\frac{\partial}{\partial x}=x\frac{\partial}{\partial x}-\frac{\partial}{\partial x}.
\]
So
\begin{eqnarray}
\sum_{i=0}^{m}(-1)^{i}\left(\begin{array}{c}
m\\
i
\end{array}\right)i^{k} & = & \left.(\alpha\frac{\partial}{\partial\alpha})^{k}(1+\alpha)^{m}\right|_{\alpha=-1}\nonumber \\
 & = & \left.\left(x\frac{\partial}{\partial x}-\frac{\partial}{\partial x}\right)^{k}x^{m}\right|_{x=0}.\label{eq:lemma_important_eq}
\end{eqnarray}
Expanding $\left(x\frac{\partial}{\partial x}-\frac{\partial}{\partial x}\right)^{k}$
we will get $2^{k}$ terms, among which those that contain $l$ factors
of $-\frac{\partial}{\partial x}$ would reduce the power of $x^{m}$
by $l$ (note that the factor $x\frac{\partial}{\partial x}$ preserves
the power of $x$). The only term surviving in Eq. (\ref{eq:lemma_important_eq})
is $x^{0}$. Clearly when $k<m$, all the terms have power at least
$m-k$. When $k=m$, the only term surviving is 
\[
\left(-\frac{\partial}{\partial x}\right)^{m}x^{m}=(-1)^{m}m!.
\]
For $k=m+1$, there are $m+1$ surviving term each of which has $m$
factors of $-\frac{\partial}{\partial x}$ and one factor of $x\frac{\partial}{\partial x}$.
They differ by the position where $x\frac{\partial}{\partial x}$
appear. Consider the term with the $i$-th factor being $x\frac{\partial}{\partial x}$;
it is 
\begin{eqnarray*}
\left(-\frac{\partial}{\partial x}\right)^{i-1}x\frac{\partial}{\partial x}\left(-\frac{\partial}{\partial x}\right)^{m+1-i}x^{m} & = & (-1)^{m}\left(\frac{\partial}{\partial x}\right)^{i-1}x\frac{\partial}{\partial x}\frac{m!}{(i-1)!}x^{i-1}\\
 & = & (-1)^{m}(i-1)!\frac{m!}{(i-2)!}\\
 & = & (-1)^{m}m!(i-1).
\end{eqnarray*}
Summing all these terms we get 
\[
\sum_{i=1}^{m+1}(-1)^{m}m!(i-1)=(-1)^{m}m!\left(\begin{array}{c}
m+1\\
2
\end{array}\right).
\]

\end{proof}

\subsection*{Proof of Theorem \ref{thm:asymptotic-Cov}}
\begin{proof}
Let $\beta=\left|\beta\right|e^{i\phi}$, $x\equiv\left|\beta\right|$,
$M=m_{1}+m_{2}+m_{3}+m_{4}$, then we have 

\begin{eqnarray*}
C_{m_{1},m_{2};m_{3},m_{4}}(\beta) & = & \sum_{n}A_{n;m_{1}m_{2}}^{*}A_{n;m_{3}m_{4}}\\
 & = & e^{i\phi(m_{2}+m_{3}-m_{1}-m_{4})}(-1)^{M}\sqrt{m_{1}!m_{2}!m_{3}!m_{4}!}x^{-M}e^{-2x^{2}}\\
 &  & \times\sum_{n}\frac{x^{4n}}{(n!)^{2}}\mathcal{L}_{m_{1}}^{n-m_{1}}(x^{2})\mathcal{L}_{m_{2}}^{n-m_{2}}(x^{2})\mathcal{L}_{m_{3}}^{n-m_{3}}(x^{2})\mathcal{L}_{m_{4}}^{n-m_{4}}(x^{2})
\end{eqnarray*}

Using the explicit formula for the associated Laguerre polynomial
\begin{eqnarray*}
\mathcal{L}_{m}^{n-m}(x^{2}) & = & \sum_{i=0}^{m}\frac{1}{i!}\left(\begin{array}{c}
n\\
m-i
\end{array}\right)(-x^{2})^{i}=\sum_{j=0}^{m}\left(\begin{array}{c}
n\\
j
\end{array}\right)\frac{(-1)^{m-j}}{(m-j)!}x^{2(m-j)},
\end{eqnarray*}
 we find that 
\begin{eqnarray*}
 &  & \sum_{n}\frac{x^{4n}}{(n!)^{2}}\mathcal{L}_{m_{1}}^{n-m_{1}}(x^{2})\mathcal{L}_{m_{2}}^{n-m_{2}}(x^{2})\mathcal{L}_{m_{3}}^{n-m_{3}}(x^{2})\mathcal{L}_{m_{4}}^{n-m_{4}}(x^{2})\\
 & = & \sum_{j_{1}=0}^{m_{1}}\sum_{j_{2}=0}^{m_{2}}\sum_{j_{3}=0}^{m_{3}}\sum_{j_{4}=0}^{m_{4}}\frac{(-1)^{M-j_{1}-j_{2}-j_{3}-j_{4}}x^{2(M-j_{1}-j_{2}-j_{3}-j_{4})}}{(m_{1}-j_{1})!j_{1}!(m_{2}-j_{2})!j_{2}!(m_{3}-j_{3})!j_{3}!(m_{4}-j_{4})!j_{4}!}\\
 &  & \times j_{1}!j_{2}!j_{3}!j_{4}!\sum_{n}\frac{x^{4n}}{(n!)^{2}}\left(\begin{array}{c}
n\\
j_{1}
\end{array}\right)\left(\begin{array}{c}
n\\
j_{2}
\end{array}\right)\left(\begin{array}{c}
n\\
j_{3}
\end{array}\right)\left(\begin{array}{c}
n\\
j_{4}
\end{array}\right)
\end{eqnarray*}
Letting $z=x^{4}$, using Lemma \ref{lem:bino-prod-sum}, we have 

\begin{eqnarray*}
 &  & j_{1}!j_{2}!j_{3}!j_{4}!\sum_{n}\frac{x^{4n}}{(n!)^{2}}\left(\begin{array}{c}
n\\
j_{1}
\end{array}\right)\left(\begin{array}{c}
n\\
j_{2}
\end{array}\right)\left(\begin{array}{c}
n\\
j_{3}
\end{array}\right)\left(\begin{array}{c}
n\\
j_{4}
\end{array}\right)\\
 & = & z^{j_{4}}\frac{\partial^{j_{4}}}{\partial z^{j_{4}}}z^{j_{3}}\frac{\partial^{j_{3}}}{\partial z^{j_{3}}}z^{j_{2}}\frac{\partial^{j_{2}}}{\partial z^{j_{2}}}z^{j_{1}}\frac{\partial^{j_{1}}}{\partial z^{j_{1}}}I_{0}(2\sqrt{z})\\
 & = & \sum_{k_{4}=0}^{j_{4}}\sum_{k_{3}=0}^{j_{3}}\frac{j_{4}!}{k_{4}!(j_{4}-k_{4})!}\frac{j_{3}!}{k_{3}!(j_{3}-k_{3})!}\frac{j_{2}!}{(j_{2}-k_{3})!}\frac{(j_{2}+j_{3}-k_{3})!}{(j_{2}+j_{3}-k_{3}-k_{4})!}(\sqrt{z})^{j_{1}+j_{2}+j_{3}+j_{4}-k_{3}-k_{4}}I_{j_{1}-j_{2}-j_{3}-j_{4}+k_{3}+k_{4}}(2\sqrt{z}).
\end{eqnarray*}
Therefore after some simplification
\begin{eqnarray*}
 &  & \sum_{n}\frac{x^{4n}}{(n!)^{2}}\mathcal{L}_{m_{1}}^{n-m_{1}}(x^{2})\mathcal{L}_{m_{2}}^{n-m_{2}}(x^{2})\mathcal{L}_{m_{3}}^{n-m_{3}}(x^{2})\mathcal{L}_{m_{4}}^{n-m_{4}}(x^{2})\\
 & = & (-1)^{M}x^{2M}\sum_{j_{1}=0}^{m_{1}}\sum_{j_{2}=0}^{m_{2}}\sum_{j_{3}=0}^{m_{3}}\sum_{j_{4}=0}^{m_{4}}\frac{(-1)^{j_{1}+j_{2}+j_{3}+j_{4}}}{(m_{1}-j_{1})!j_{1}!(m_{2}-j_{2})!j_{2}!(m_{3}-j_{3})!j_{3}!(m_{4}-j_{4})!j_{4}!}\\
 &  & \times\sum_{k_{4}=0}^{j_{4}}\sum_{k_{3}=0}^{j_{3}}\frac{j_{4}!}{k_{4}!(j_{4}-k_{4})!}\frac{j_{3}!}{k_{3}!(j_{3}-k_{3})!}\frac{j_{2}!}{(j_{2}-k_{3})!}\frac{(j_{2}+j_{3}-k_{3})!}{(j_{2}+j_{3}-k_{3}-k_{4})!}(x^{2})^{-k_{3}-k_{4}}I_{j_{1}-j_{2}-j_{3}-j_{4}+k_{3}+k_{4}}(2x^{2}).
\end{eqnarray*}
Part of the above formula can be further simplified,
\begin{eqnarray*}
 &  & \sum_{j_{3}=0}^{m_{3}}\sum_{j_{4}=0}^{m_{4}}\frac{(-1)^{j_{3}+j_{4}}}{(m_{3}-j_{3})!j_{3}!(m_{4}-j_{4})!j_{4}!}\\
 &  & \times\sum_{k_{4}=0}^{j_{4}}\sum_{k_{3}=0}^{j_{3}}\frac{j_{4}!}{k_{4}!(j_{4}-k_{4})!}\frac{j_{3}!}{k_{3}!(j_{3}-k_{3})!}\frac{j_{2}!}{(j_{2}-k_{3})!}\frac{(j_{2}+j_{3}-k_{3})!}{(j_{2}+j_{3}-k_{3}-k_{4})!}(x^{2})^{-k_{3}-k_{4}}I_{j_{1}-j_{2}-j_{3}-j_{4}+k_{3}+k_{4}}(2x^{2})\\
 & = & \sum_{k_{4}=0}^{m_{4}}\sum_{k_{3}=0}^{m_{3}}\sum_{j_{3}=k_{3}}^{m_{3}}\sum_{j_{4}=k_{4}}^{m_{4}}\frac{(-1)^{j_{3}+j_{4}}}{(m_{3}-j_{3})!(m_{4}-j_{4})!}\\
 &  & \times\frac{1}{k_{4}!(j_{4}-k_{4})!}\frac{1}{k_{3}!(j_{3}-k_{3})!}\frac{j_{2}!}{(j_{2}-k_{3})!}\frac{(j_{2}+j_{3}-k_{3})!}{(j_{2}+j_{3}-k_{3}-k_{4})!}(x^{2})^{-k_{3}-k_{4}}I_{j_{1}-j_{2}-j_{3}-j_{4}+k_{3}+k_{4}}(2x^{2})\\
 & = & \sum_{k_{4}=0}^{m_{4}}\sum_{k_{3}=0}^{m_{3}}\sum_{j_{3}=0}^{m_{3}-k_{3}}\sum_{j_{4}=0}^{m_{4}-k_{4}}\frac{(-1)^{j_{3}+k_{3}+j_{4}+k_{4}}}{(m_{3}-j_{3}-k_{3})!(m_{4}-j_{4}-k_{4})!}\\
 &  & \times\frac{1}{k_{4}!j_{4}!}\frac{1}{k_{3}!j_{3}!}\frac{j_{2}!}{(j_{2}-k_{3})!}\frac{(j_{2}+j_{3})!}{(j_{2}+j_{3}-k_{4})!}(x^{2})^{-k_{3}-k_{4}}I_{j_{1}-j_{2}-j_{3}-j_{4}}(2x^{2})\\
 & = & \sum_{k_{4}=0}^{m_{4}}\sum_{k_{3}=0}^{m_{3}}\frac{(x^{2})^{-k_{3}-k_{4}}}{k_{3}!k_{4}!}(-1)^{k_{3}+k_{4}}\frac{j_{2}!}{(j_{2}-k_{3})!(m_{3}-k_{3})!(m_{4}-k_{4})!}\\
 &  & \times\sum_{j_{3}=0}^{m_{3}-k_{3}}(-1)^{j_{3}}\frac{(j_{2}+j_{3})!}{(j_{2}+j_{3}-k_{4})!}\left(\begin{array}{c}
m_{3}-k_{3}\\
j_{3}
\end{array}\right)\sum_{j_{4}=0}^{m_{4}-k_{4}}(-1)^{j_{4}}\left(\begin{array}{c}
m_{4}-k_{4}\\
j_{4}
\end{array}\right)I_{j_{1}-j_{2}-j_{3}-j_{4}}(2x^{2}).
\end{eqnarray*}
Now 
\begin{eqnarray*}
 &  & \sum_{n}\frac{x^{4n}}{(n!)^{2}}\mathcal{L}_{m_{1}}^{n-m_{1}}(x^{2})\mathcal{L}_{m_{2}}^{n-m_{2}}(x^{2})\mathcal{L}_{m_{3}}^{n-m_{3}}(x^{2})\mathcal{L}_{m_{4}}^{n-m_{4}}(x^{2})\\
 & = & (-1)^{M}x^{2M}\frac{1}{m_{1}!m_{2}!}\sum_{k_{4}=0}^{m_{4}}\sum_{k_{3}=0}^{m_{3}}\frac{(x^{2})^{-k_{3}-k_{4}}}{k_{3}!k_{4}!(m_{3}-k_{3})!(m_{4}-k_{4})!}(-1)^{k_{3}+k_{4}}\\
 &  & \times\sum_{j_{1}=0}^{m_{1}}(-1)^{j_{1}}\left(\begin{array}{c}
m_{1}\\
j_{1}
\end{array}\right)\sum_{j_{2}=0}^{m_{2}}(-1)^{j_{2}}\left(\begin{array}{c}
m_{2}\\
j_{2}
\end{array}\right)\frac{j_{2}!}{(j_{2}-k_{3})!}\\
 &  & \times\sum_{j_{3}=0}^{m_{3}-k_{3}}(-1)^{j_{3}}\frac{(j_{2}+j_{3})!}{(j_{2}+j_{3}-k_{4})!}\left(\begin{array}{c}
m_{3}-k_{3}\\
j_{3}
\end{array}\right)\sum_{j_{4}=0}^{m_{4}-k_{4}}(-1)^{j_{4}}\left(\begin{array}{c}
m_{4}-k_{4}\\
j_{4}
\end{array}\right)I_{j_{1}-j_{2}-j_{3}-j_{4}}(2x^{2}).
\end{eqnarray*}

We now focus on one term in the double summation $\sum_{k_{4}=0}^{m_{4}}\sum_{k_{3}=0}^{m_{3}}$,
i.e., the summand with fixed $k_{3}$ and $k_{4}$. It is known that
for large $z$, 
\[
I_{\nu}(z)\sim\frac{e^{z}}{\sqrt{2\pi z}}\left[1-\frac{4\nu^{2}-1}{8z}+\frac{(4\nu^{2}-1)(4\nu^{2}-9)}{2!(8z)^{2}}+\cdots+(-1)^{l}\frac{\prod_{i=1}^{l}\left[4\nu^{2}-(2i-1)^{2}\right]}{l!(8z)^{l}}+\cdots\right],
\]
in our case 
\[
I_{j_{1}-j_{2}-j_{3}-j_{4}}(2x^{2})\sim\frac{e^{2x^{2}}}{2x\sqrt{\pi}}\left[1-\frac{4(j_{1}-j_{2}-j_{3}-j_{4})^{2}-1}{16x^{2}}\cdots+(-1)^{l}\frac{\prod_{i=1}^{l}\left[4(j_{1}-j_{2}-j_{3}-j_{4})^{2}-(2i-1)^{2}\right]}{l!(4x)^{2l}}+\cdots\right].
\]
The expansion of $I_{j_{1}-j_{2}-j_{3}-j_{4}}(2x^{2})$ contains polynomials
of the form $j_{1}^{p_{1}}j_{2}^{p_{2}}j_{3}^{p_{3}}j_{4}^{p_{4}}$.
Note also $\frac{j_{2}!}{(j_{2}-k_{3})!}$ is a polynomial of $j_{2}$
of degree $k_{3}$ and $\frac{(j_{2}+j_{3})!}{(j_{2}+j_{3}-k_{4})!}$
is polynomial of $(j_{2}+j_{3})$ of degree $k_{4}$. So overall the
summand of the quadruple summation $\sum_{j_{1}=0}^{m_{1}}\sum_{j_{2}=0}^{m_{2}}\sum_{j_{3}=0}^{m_{3}-k_{3}}\sum_{j_{4}=0}^{m_{4}-k_{4}}$
is a combination of polynomials of the form $j_{1}^{p_{1}}j_{2}^{p_{2}}j_{3}^{p_{3}}j_{4}^{p_{4}}$.
Due to Lemma \ref{lem:moments-binom}, the terms $j_{1}^{p_{1}}j_{2}^{p_{2}}j_{3}^{p_{3}}j_{4}^{p_{4}}$
that gives non-zero contribution are those with $p_{1}\ge m_{1}$,
$p_{2}\ge m_{2}$, $p_{3}\ge m_{3}-k_{3}$, and $p_{4}\ge m_{4}-k_{4}$.
We try to find such terms with the lowest power in $\frac{1}{x}$,
i.e., to find the smallest $l$ such that the  expression 
\[
\frac{j_{2}!}{(j_{2}-k_{3})!}\frac{(j_{2}+j_{3})!}{(j_{2}+j_{3}-k_{4})!}\prod_{i=1}^{l}\left[4(j_{1}-j_{2}-j_{3}-j_{4})^{2}-(2i-1)^{2}\right]
\]
 contains a term like $j_{1}^{m_{1}}j_{2}^{m_{2}}j_{3}^{m_{3}-k_{3}}j_{4}^{m_{4}-k_{4}}$
or of even higher order. Since

\[
\frac{j_{2}!}{(j_{2}-k_{3})!}\frac{(j_{2}+j_{3})!}{(j_{2}+j_{3}-k_{4})!}\prod_{i=1}^{l}\left[4(j_{1}-j_{2}-j_{3}-j_{4})^{2}-(2i-1)^{2}\right]=j_{2}^{k_{3}}(j_{2}+j_{3})^{k_{4}}4^{l}(j_{1}-j_{2}-j_{3}-j_{4})^{2l}+(\text{lower order terms}),
\]
we must require 
\[
k_{3}+k_{4}+2l\ge m_{1}+m_{2}+m_{3}-k_{3}+m_{4}-k_{4},
\]
i.e., 
\[
2(l+k_{3}+k_{4})\ge m_{1}+m_{2}+m_{3}+m_{4}=M.
\]
 Thus the smallest $l$ should be 
\[
l_{*}=\begin{cases}
\frac{M}{2}-k_{3}-k_{4}, & \text{if }M\text{ even};\\
\frac{M+1}{2}-k_{3}-k_{4}, & \text{if }M\text{ odd}.
\end{cases}
\]
So if we neglect terms that either give zero contribution to the quadruple
sum over $j_{i}$ or are not of the leading order in $\frac{1}{x}$,
\begin{eqnarray*}
 &  & \frac{j_{2}!}{(j_{2}-k_{3})!}\frac{(j_{2}+j_{3})!}{(j_{2}+j_{3}-k_{4})!}I_{j_{1}-j_{2}-j_{3}-j_{4}}(2x^{2})\\
 & \sim & j_{2}^{k_{3}}(j_{2}+j_{3})^{k_{4}}(-1)^{l_{*}}\frac{4^{l_{*}}(j_{1}-j_{2}-j_{3}-j_{4})^{2l_{*}}}{l_{*}!(4x)^{2l_{*}}}\frac{e^{2x^{2}}}{2x\sqrt{\pi}}\\
 & = & j_{2}^{k_{3}}(j_{2}+j_{3})^{k_{4}}(j_{1}-j_{2}-j_{3}-j_{4})^{2l_{*}}\frac{(-1)^{l_{*}}}{l_{*}!4^{l_{*}}x^{2l_{*}}}\frac{e^{2x^{2}}}{2x\sqrt{\pi}}.
\end{eqnarray*}
\textbf{ When $M$ is even,} $2(l_{*}+k_{3}+k_{4})=M$, so 

\begin{eqnarray*}
 &  & \frac{j_{2}!}{(j_{2}-k_{3})!}\frac{(j_{2}+j_{3})!}{(j_{2}+j_{3}-k_{4})!}I_{j_{1}-j_{2}-j_{3}-j_{4}}(2x^{2})\\
 & \sim & j_{2}^{k_{3}}\sum_{\mu=0}^{k_{4}}\left(\begin{array}{c}
k_{4}\\
\mu
\end{array}\right)j_{2}^{\mu}j_{3}^{(k_{4}-\mu)}\\
 &  & \times(-1)^{m_{2}+m_{3}+m_{4}-2(k_{3}+k_{4})}j_{1}^{m_{1}}j_{2}^{m_{2}-k_{3}-\mu}j_{3}^{m_{3}-k_{3}-k_{4}+\mu}j_{4}^{m_{4}-k_{4}}\\
 &  & \times\left(\begin{array}{c}
M-2k_{3}-2k_{4}\\
m_{1},m_{2}-k_{3}-\mu,m_{3}-k_{3}-k_{4}+\mu,m_{4}-k_{4}
\end{array}\right)\frac{(-1)^{l_{*}}}{l_{*}!4^{l_{*}}x^{2l_{*}}}\frac{e^{2x^{2}}}{2x\sqrt{\pi}}\\
 & = & (-1)^{M-m_{1}}\sum_{\mu=0}^{k_{4}}\left(\begin{array}{c}
k_{4}\\
\mu
\end{array}\right)\left(\begin{array}{c}
M-2k_{3}-2k_{4}\\
m_{1},\, m_{2}-k_{3}-\mu,\, m_{3}-k_{3}-k_{4}+\mu,\, m_{4}-k_{4}
\end{array}\right)j_{1}^{m_{1}}j_{2}^{m_{2}}j_{3}^{m_{3}-k_{3}}j_{4}^{m_{4}-k_{4}}\\
 &  & \times\frac{(-1)^{M/2-k_{3}-k_{4}}}{(M/2-k_{3}-k_{4})!2^{(M-2k_{3}-2k_{4})}x^{(M-2k_{3}-2k_{4})}}\frac{e^{2x^{2}}}{2x\sqrt{\pi}},
\end{eqnarray*}
 where $\left(\begin{array}{c}
n\\
k_{1},\, k_{2},\,\cdots,\, k_{m}
\end{array}\right)\equiv\frac{n!}{k_{1}!k_{2}!\cdots k_{m}!}$. 

Using Lemma \ref{lem:moments-binom}, 
\begin{eqnarray*}
 &  & \sum_{j_{1}=0}^{m_{1}}(-1)^{j_{1}}\left(\begin{array}{c}
m_{1}\\
j_{1}
\end{array}\right)\sum_{j_{2}=0}^{m_{2}}(-1)^{j_{2}}\left(\begin{array}{c}
m_{2}\\
j_{2}
\end{array}\right)\sum_{j_{3}=0}^{m_{3}-k_{3}}(-1)^{j_{3}}\left(\begin{array}{c}
m_{3}-k_{3}\\
j_{3}
\end{array}\right)\sum_{j_{4}=0}^{m_{4}-k_{4}}(-1)^{j_{4}}\left(\begin{array}{c}
m_{4}-k_{4}\\
j_{4}
\end{array}\right)j_{1}^{m_{1}}j_{2}^{m_{2}}j_{3}^{m_{3}-k_{3}}j_{4}^{m_{4}-k_{4}}\\
 & = & \sum_{j_{1}=0}^{m_{1}}(-1)^{j_{1}}\left(\begin{array}{c}
m_{1}\\
j_{1}
\end{array}\right)j_{1}^{m_{1}}\sum_{j_{2}=0}^{m_{2}}(-1)^{j_{2}}\left(\begin{array}{c}
m_{2}\\
j_{2}
\end{array}\right)j_{2}^{m_{2}}\sum_{j_{3}=0}^{m_{3}-k_{3}}(-1)^{j_{3}}\left(\begin{array}{c}
m_{3}-k_{3}\\
j_{3}
\end{array}\right)j_{3}^{m_{3}-k_{3}}\sum_{j_{4}=0}^{m_{4}-k_{4}}(-1)^{j_{4}}\left(\begin{array}{c}
m_{4}-k_{4}\\
j_{4}
\end{array}\right)j_{4}^{m_{4}-k_{4}}\\
 & = & (-1)^{M-k_{3}-k_{4}}m_{1}!m_{2}!(m_{3}-k_{3})!(m_{4}-k_{4})!.
\end{eqnarray*}
Plugging back to the expression of $\sum_{n}\frac{x^{4n}}{(n!)^{2}}\mathcal{L}_{m_{1}}^{n-m_{1}}(x^{2})\mathcal{L}_{m_{2}}^{n-m_{2}}(x^{2})\mathcal{L}_{m_{3}}^{n-m_{3}}(x^{2})\mathcal{L}_{m_{4}}^{n-m_{4}}(x^{2})$
we eventually get 
\begin{eqnarray*}
 &  & \sum_{n}\frac{x^{4n}}{(n!)^{2}}\mathcal{L}_{m_{1}}^{n-m_{1}}(x^{2})\mathcal{L}_{m_{2}}^{n-m_{2}}(x^{2})\mathcal{L}_{m_{3}}^{n-m_{3}}(x^{2})\mathcal{L}_{m_{4}}^{n-m_{4}}(x^{2})\\
 & \sim & (-1)^{m_{1}+M/2}e^{2x^{2}}x^{M-1}\frac{1}{2^{M+1}}\frac{1}{\sqrt{\pi}}\sum_{k_{4}=0}^{m_{4}}\sum_{k_{3}=0}^{m_{3}}\frac{(-1)^{k_{3}+k_{4}}2^{2(k_{3}+k_{4})}}{k_{3}!k_{4}!(M/2-k_{3}-k_{4})!}\\
 &  & \times\sum_{\mu=0}^{k_{4}}\left(\begin{array}{c}
k_{4}\\
\mu
\end{array}\right)\left(\begin{array}{c}
M-2k_{3}-2k_{4}\\
m_{1},\, m_{2}-k_{3}-\mu,\, m_{3}-k_{3}-k_{4}+\mu,\, m_{4}-k_{4}
\end{array}\right).
\end{eqnarray*}
Finally, we have the leading order contribution for the even $M$
case:
\begin{eqnarray*}
C_{m_{1},m_{2};m_{3},m_{4}}(\beta) & \sim & x^{-1}e^{i\phi(m_{2}+m_{3}-m_{1}-m_{4})}\sqrt{m_{1}!m_{2}!m_{3}!m_{4}!}(-1)^{m_{1}+M/2}\frac{1}{2^{M+1}\sqrt{\pi}}\\
 &  & \times\sum_{k_{4}=0}^{m_{4}}\sum_{k_{3}=0}^{m_{3}}\frac{(-1)^{k_{3}+k_{4}}2^{2(k_{3}+k_{4})}}{k_{3}!k_{4}!(M/2-k_{3}-k_{4})!}\\
 &  & \times\sum_{\mu=0}^{k_{4}}\left(\begin{array}{c}
k_{4}\\
\mu
\end{array}\right)\left(\begin{array}{c}
M-2k_{3}-2k_{4}\\
m_{1},\, m_{2}-k_{3}-\mu,\, m_{3}-k_{3}-k_{4}+\mu,\, m_{4}-k_{4}
\end{array}\right)\\
 & = & \frac{g(m_{1},\, m_{2},\, m_{3},\, m_{4},\,\phi)}{\left|\beta\right|}.
\end{eqnarray*}

\textbf{When $M$ is odd,} $2(l_{*}+k_{3}+k_{4})=M+1$. In this case
five terms give non-zero contribution under the quadruple sum of $j_{i}$,
which are $P_{1}\equiv j_{1}^{m_{1}+1}j_{2}^{m_{2}}j_{3}^{m_{3}-k_{3}}j_{4}^{m_{4}-k_{4}}$,
$P_{2}\equiv j_{1}^{m_{1}}j_{2}^{m_{2}+1}j_{3}^{m_{3}-k_{3}}j_{4}^{m_{4}-k_{4}}$,
$P_{3}\equiv j_{1}^{m_{1}}j_{2}^{m_{2}}j_{3}^{m_{3}-k_{3}+1}j_{4}^{m_{4}-k_{4}}$,
$P_{4}\equiv j_{1}^{m_{1}}j_{2}^{m_{2}}j_{3}^{m_{3}-k_{3}}j_{4}^{m_{4}-k_{4}+1}$
and $P_{5}\equiv j_{1}^{m_{1}}j_{2}^{m_{2}}j_{3}^{m_{3}-k_{3}}j_{4}^{m_{4}-k_{4}}$.
$P_{1},\,\cdots,\: P_{4}$ are the highest-order terms about the variables
$j_{i}$ in the summand and $P_{5}$ is the next highest order. Let
us write 
\begin{eqnarray*}
\frac{j_{2}!}{(j_{2}-k_{3})!}\frac{(j_{2}+j_{3})!}{(j_{2}+j_{3}-k_{4})!}I_{j_{1}-j_{2}-j_{3}-j_{4}}(2x^{2}) & \sim & \frac{(-1)^{l_{*}}}{l_{*}!4^{l_{*}}x^{2l_{*}}}\frac{e^{2x^{2}}}{2x\sqrt{\pi}}\sum_{\nu=1}^{5}\lambda_{\nu}P_{\nu}.
\end{eqnarray*}
The coefficients $\lambda_{\nu}$ are essentially combinatoric factors
and it is not difficult to work them out, although the process can
be long and tedious. Eventually we find, 
\[
\lambda_{1}=(-1)^{m_{1}+1}\sum_{\mu=0}^{k_{4}}\left(\begin{array}{c}
k_{4}\\
\mu
\end{array}\right)\left(\begin{array}{c}
M+1-2k_{3}-2k_{4}\\
m_{1}+1,\, m_{2}-k_{3}-\mu,\, m_{3}-k_{3}-k_{4}+\mu,\, m_{4}-k_{4}
\end{array}\right),
\]

\[
\lambda_{2}=(-1)^{m_{1}}\sum_{\mu=0}^{k_{4}}\left(\begin{array}{c}
k_{4}\\
\mu
\end{array}\right)\left(\begin{array}{c}
M+1-2k_{3}-2k_{4}\\
m_{1},\, m_{2}-k_{3}-\mu+1,\, m_{3}-k_{3}-k_{4}+\mu,\, m_{4}-k_{4}
\end{array}\right),
\]

\[
\lambda_{3}=(-1)^{m_{1}}\sum_{\mu=0}^{k_{4}}\left(\begin{array}{c}
k_{4}\\
\mu
\end{array}\right)\left(\begin{array}{c}
M+1-2k_{3}-2k_{4}\\
m_{1},\, m_{2}-k_{3}-\mu,\, m_{3}-k_{3}-k_{4}+\mu+1,\, m_{4}-k_{4}
\end{array}\right),
\]
\[
\lambda_{4}=(-1)^{m_{1}}\sum_{\mu=0}^{k_{4}}\left(\begin{array}{c}
k_{4}\\
\mu
\end{array}\right)\left(\begin{array}{c}
M+1-2k_{3}-2k_{4}\\
m_{1},\, m_{2}-k_{3}-\mu,\, m_{3}-k_{3}-k_{4}+\mu,\, m_{4}-k_{4}+1
\end{array}\right),
\]
\begin{eqnarray*}
\lambda_{5} & = & \sum_{\mu=0}^{k_{4}}\frac{-k_{3}(k_{3}-1)}{2}\left(\begin{array}{c}
k_{4}\\
\mu
\end{array}\right)\left(\begin{array}{c}
M+1-2k_{3}-2k_{4}\\
m_{1},\, m_{2}-k_{3}-\mu+1,\, m_{3}-k_{3}-k_{4}+\mu,\, m_{4}-k_{4}
\end{array}\right)\\
 &  & +\sum_{\mu=0}^{k_{4}-1}\frac{-k_{4}(k_{4}-1)}{2}\left(\begin{array}{c}
k_{4}-1\\
\mu
\end{array}\right)\left(\begin{array}{c}
M+1-2k_{3}-2k_{4}\\
m_{1},\, m_{2}-k_{3}-\mu,\, m_{3}-k_{3}-k_{4}+\mu+1,\, m_{4}-k_{4}
\end{array}\right).
\end{eqnarray*}
The \textbf{key point} to notice is that because $2(l_{*}+k_{3}+k_{4})=M+1$,
now the leading term in $\frac{1}{x}$ is 
\begin{eqnarray*}
\frac{(-1)^{l_{*}}}{l_{*}!4^{l_{*}}x^{2l_{*}}}\frac{e^{2x^{2}}}{2x\sqrt{\pi}} & = & \frac{(-1)^{(M+1)/2-k_{3}-k_{4}}}{\left((M+1)/2-k_{3}-k_{4}\right)!2^{(M+1-2k_{3}-2k_{4})}x^{(M+1-2k_{3}-2k_{4})}}\frac{e^{2x^{2}}}{2x\sqrt{\pi}}\sim\frac{1}{x^{(M+1-2k_{3}-2k_{4})}}\frac{e^{2x^{2}}}{2x\sqrt{\pi}},
\end{eqnarray*}
which is one order higher in $\frac{1}{x}$ compared to the even $M$
case. 

Using Lemma \ref{lem:moments-binom}, 
\begin{eqnarray*}
 &  & \sum_{j_{1}=0}^{m_{1}}(-1)^{j_{1}}\left(\begin{array}{c}
m_{1}\\
j_{1}
\end{array}\right)\sum_{j_{2}=0}^{m_{2}}(-1)^{j_{2}}\left(\begin{array}{c}
m_{2}\\
j_{2}
\end{array}\right)\sum_{j_{3}=0}^{m_{3}-k_{3}}(-1)^{j_{3}}\left(\begin{array}{c}
m_{3}-k_{3}\\
j_{3}
\end{array}\right)\sum_{j_{4}=0}^{m_{4}-k_{4}}(-1)^{j_{4}}\left(\begin{array}{c}
m_{4}-k_{4}\\
j_{4}
\end{array}\right)\sum_{\nu=1}^{5}\lambda_{\nu}P_{\nu}\\
 & = & (-1)^{M-k_{3}-k_{4}}m_{1}!m_{2}!(m_{3}-k_{3})!(m_{4}-k_{4})!\\
 &  & \times\left[\lambda_{5}+\left(\begin{array}{c}
m_{1}+1\\
2
\end{array}\right)\lambda_{1}+\left(\begin{array}{c}
m_{2}+1\\
2
\end{array}\right)\lambda_{2}+\left(\begin{array}{c}
m_{3}-k_{3}+1\\
2
\end{array}\right)\lambda_{3}+\left(\begin{array}{c}
m_{4}-k_{4}+1\\
2
\end{array}\right)\lambda_{4}\right].
\end{eqnarray*}

Now 
\begin{eqnarray*}
 &  & \sum_{n}\frac{x^{4n}}{(n!)^{2}}\mathcal{L}_{m_{1}}^{n-m_{1}}(x^{2})\mathcal{L}_{m_{2}}^{n-m_{2}}(x^{2})\mathcal{L}_{m_{3}}^{n-m_{3}}(x^{2})\mathcal{L}_{m_{4}}^{n-m_{4}}(x^{2})\\
 & \sim & (-1)^{(M+1)/2}e^{2x^{2}}x^{M-2}\frac{1}{2^{M+2}\sqrt{\pi}}\sum_{k_{4}=0}^{m_{4}}\sum_{k_{3}=0}^{m_{3}}\frac{(-1)^{k_{3}+k_{4}}2^{2(k_{3}+k_{4})}}{k_{3}!k_{4}!\left((M+1)/2-k_{3}-k_{4}\right)!}\\
 &  & \times\left[\lambda_{5}+\left(\begin{array}{c}
m_{1}+1\\
2
\end{array}\right)\lambda_{1}+\left(\begin{array}{c}
m_{2}+1\\
2
\end{array}\right)\lambda_{2}+\left(\begin{array}{c}
m_{3}-k_{3}+1\\
2
\end{array}\right)\lambda_{3}+\left(\begin{array}{c}
m_{4}-k_{4}+1\\
2
\end{array}\right)\lambda_{4}\right].
\end{eqnarray*}
Finally,
\begin{eqnarray*}
C_{m_{1},m_{2};m_{3},m_{4}}(\beta) & \sim & -x^{-2}e^{i\phi(m_{2}+m_{3}-m_{1}-m_{4})}\sqrt{m_{1}!m_{2}!m_{3}!m_{4}!}(-1)^{(M+1)/2}\frac{1}{2^{M+2}\sqrt{\pi}}\\
 &  & \times\sum_{k_{4}=0}^{m_{4}}\sum_{k_{3}=0}^{m_{3}}\frac{(-1)^{k_{3}+k_{4}}2^{2(k_{3}+k_{4})}}{k_{3}!k_{4}!\left((M+1)/2-k_{3}-k_{4}\right)!}\\
 &  & \times\left[\lambda_{5}+\left(\begin{array}{c}
m_{1}+1\\
2
\end{array}\right)\lambda_{1}+\left(\begin{array}{c}
m_{2}+1\\
2
\end{array}\right)\lambda_{2}+\left(\begin{array}{c}
m_{3}-k_{3}+1\\
2
\end{array}\right)\lambda_{3}+\left(\begin{array}{c}
m_{4}-k_{4}+1\\
2
\end{array}\right)\lambda_{4}\right]\\
 & = & \frac{g(m_{1},\, m_{2},\, m_{3},\, m_{4},\,\phi)}{\left|\beta\right|^{2}}.
\end{eqnarray*}

In summary, we have thus proved that for large $\left|\beta\right|$,
\[
C_{m_{1}m_{2},m_{3}m_{4}}(\beta)\sim\begin{cases}
g(m_{1},\, m_{2},\, m_{3},\, m_{4},\,\phi)/\left|\beta\right|, & \sum_{i=1}^{4}m_{i}\ \text{is even};\\
g(m_{1},\, m_{2},\, m_{3},\, m_{4},\,\phi)/\left|\beta\right|^{2}, & \sum_{i=1}^{4}m_{i}\ \text{is odd};
\end{cases}
\]

In fact our technique can be used to prove the general asymptotic
result

\[
\sum_{n=0}^{\infty}\frac{1}{(n!)^{2}}x^{4n}\prod_{i}\mathcal{L}_{m_{i}}^{n-m_{i}}(x^{2})\sim\begin{cases}
x^{\sum_{i}m_{i}-1}, & \ \sum_{i}m_{i}\ \text{is even};\\
x^{\sum_{i}m_{i}-2}, & \ \sum_{i}m_{i}\ \text{is odd}.
\end{cases}
\]

\end{proof}

\end{widetext}

\bibliographystyle{apsrev4-1}
\bibliography{reference_list}

\end{document}